\newcommand{\R}{\mathbb{R}}
\newcommand{\E}{\mathbb{E}}
\newcommand{\N}{\mathbb{N}}
\newcommand{\Prob}{\mathbb{P}}
\newcommand{\Q}{\mathbb{Q}}
\newcommand{\1}{\mathbf{1}}
\newcommand{\CF}{\mathcal{F}}
\newcommand{\Qd}{\Q^{\$}}
\newcommand{\eu}{\text{\euro}}
\newcommand{\Qe}{\Q^{\eu}}
\newcommand{\dd}{\mathrm{d}}
\newcommand{\tInd}{_{t \in [0,T]}}
\newtheorem{thm}{Theorem}
\newtheorem{prop}{Proposition}
\newtheorem{lemma}{Lemma}
\newtheorem{cor}{Corollary}
\theoremstyle{remark}
\theoremstyle{remark}
\theoremstyle{remark}
\theoremstyle{remark}
\newtheorem{remark}{Remark}
\theoremstyle{remark}
\title{Why are quadratic normal volatility models analytically tractable?\thanks{We thank Alex Lipton and Nicolas Perkowski for
their helpful comments on an early version of this paper.  We are grateful to the two anonymous referees and the associate editor for their careful reading and very helpful suggestions, which improved this paper.  The views represented herein are the authors' own views and do not necessarily represent the views of Morgan Stanley or its affiliates and are not a product of Morgan Stanley research.
}}
\author{Peter Carr\thanks{New York University, Courant Institute, E-mail: pcarr@nyc.rr.com} \and
    Travis Fisher\thanks{E-mail: traviswfisher@gmail.com} \and
    Johannes Ruf\thanks{University of Oxford, Oxford-Man Institute of Quantitative
    Finance and Mathematical Institute,
    E-mail: johannes.ruf@oxford-man.ox.ac.uk}}
\begin{document}
\thispagestyle{plain} \maketitle

\begin{abstract}
\noindent
We discuss the class of ``Quadratic Normal Volatility" (QNV) models, which have drawn much attention in the financial industry due to their analytic tractability and flexibility. We characterize these models as those that can be obtained from stopped Brownian motion by a simple transformation and a change of measure that  depends only on the terminal value of the stopped Brownian motion. This explains the existence of explicit analytic formulas for option prices within QNV models in the academic literature.  Furthermore, via a different transformation, we connect
a certain class of QNV models to the dynamics of geometric Brownian motion and discuss
changes of num\'eraires if the num\'eraire is modelled as a QNV process.\\
{\bf Keywords:} Local volatility, Pricing, Foreign Exchange, Riccati equation, Change of num\'eraire, Local martingale, Semistatic hedging, Hyperinflation\\
{\bf AMS Subject Classification:}  60H99, 60G99,  91G20,  91G99
\end{abstract}
\section{Introduction and model}  \label{SS qnv model}
Quadratic Normal Volatility (QNV) models  have recently drawn much attention in both industry and academia since they
are not only easily tractable as generalizations of the standard Black--Scholes framework
but also can be well calibrated to various market scenarios due to their flexibility.
In this paper, we focus on associating the dynamics of QNV processes with the dynamics of Brownian motion and geometric Brownian motion.   These relationships reveal why analytic formulas for option prices can be (and indeed have been) found. However, we shall abstain here from computing explicit option prices implied by a QNV model. Formulas for these can  be found, for example, in \citet{Andersen}.

It is well known that, in complete strict local martingale models, interpreting expectations as prices leads to seemingly paradoxical contingent claim prices. For example, standard put-call parity  is not satisfied in such models, as \citet{Andersen} discusses.  However, as we illustrate in our companion paper, \citet{CFR2011}, a simple adjustment to standard prices yields a pricing rule that bypasses those issues and, moreover, leads to prices that can be interpreted, in some sense, as minimal hedging costs.  With these considerations in mind, the reader should not worry about the fact that, under certain parameter constellations, a QNV process is not a true martingale, but a strict local martingale. Instead, the reader should keep in mind that one always can use sums of certain expectations as (adjusted) prices and thus avoid all those issues related to the pricing of contingent claims with strict local martingales as underlying.  In Section~\ref{S hyperinflation} of this paper, we provide formulas to compute these prices for arbitrary path-dependent contingent claims.

After introducing QNV models in this section and providing an overview of the relevant literature, we show in Section~\ref{SS Wiener} how QNV models can be obtained from transforming a stopped Brownian motion.  
In Section~\ref{S two real}, we work out a connection between a certain class of QNV processes and geometric Brownian motion, and, in 
Section~\ref{S closedness}, we formalize the observation that QNV models are stable under changes of num\'eraires.  Section~\ref{S semistatic} contains some preliminary results on semistatic hedging within QNV models, and
Section~\ref{S hyperinflation} provides an interpretation of the strict local martingale dynamics of certain QNV processes as the possibility of a hyperinflation under a dominating measure, in the spirit of \citet{CFR2011}. The appendix
contains a technical result.

\subsubsection*{Model}
If not specified otherwise, we 
work on a filtered probability space $(\Omega, \mathcal{F}, \{\CF_t\}_{t \geq 0}, \Q)$, equipped with a Brownian motion $B = \{B_t\}_{t \geq 0}$.  We introduce a process $Y= \{Y_t\}_{t \geq 0}$ with deterministic initial value $Y_0=y_0>0$, whose dynamics solve
\begin{align}   \label{E qnv X}
	\dd Y_t = (e_1 Y^2_t + e_2 Y_t + e_3) \dd B_t,
\end{align}
 where $e_1, e_2, e_3 \in \R$.  Problem~3.3.2 in \citet{McKean_1969} yields the
    existence of a unique, strong solution to this stochastic differential equation. 
We define $S$ as the first hitting time of zero by $Y$ and shall also study a stopped version $X = \{X_t\}_{t \geq 0}$ of $Y$, defined by $X_t := Y^S_t := Y_{t \wedge S}$ for all ${t \geq 0}$. 
We set $x_0 := X_0  = y_0$.

The dynamics of $Y$  and $X$ of course strongly depend on the parameters $e_1, e_2, e_3 $
 in the quadratic polynomial $P(z) := e_1 z^2 + e_2 z + e_3$ appearing in \eqref{E qnv X}.
 We shall say that $Y$ ($X$) is a (stopped) QNV process\footnote{We use the word ``normal'' in the name Quadratic Normal Volatility to emphasize the fact that we are interested in a model  where the normal local volatility is quadratic, as opposed to the lognormal local volatility, such as in the specification $(\dd \widetilde{Y}_t)/\widetilde{Y}_t = (e_1 \widetilde{Y}^2_t + e_2 \widetilde{Y}_t + e_3) \dd B_t$.} with polynomial $P$.
 The most important special cases are the following:
\begin{itemize}
    \item $e_1=e_2=0$, corresponding to Brownian motion;
    \item $e_1=e_3=0$, corresponding to geometric Brownian motion;
    \item $e_2=e_3=0$, corresponding to the reciprocal of a three-dimensional Bessel process.
\end{itemize}

Feller's test for explosions directly yields
that $Y$ does not hit any real roots of $P$ (except if $P(y_0)=0$,
in which case $Y \equiv y_0$ would just be constant); see Theorem~5.5.29 in \citet{KS1}.  The $\Q$-local martingales $Y$ and $X$ are not necessarily  true $\Q$-martingales. Indeed, 
the configuration of the roots of $P$ determines whether $Y$ and $X$ are true
$\Q$-martingales or strict $\Q$-local martingales: 
\begin{prop}[Martingality of QNV processes]   \label{P qnv martingality}
    The cases where the QNV process $Y$ is a true martingale are exactly the case when $e_1 = 0$ and the case when $P$ has two real roots $r_1, r_2 \in \R$ with $r_1 \leq r_2$ and $y_0 \in [r_1,r_2]$.  The cases where the   stopped QNV process $X$ is a true martingale are exactly the case when $e_1 = 0$ and the case when $P$ has a root $r \in \R$ with $x_0  \leq r$.
\end{prop}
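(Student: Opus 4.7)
The plan is to split along whether $e_1 = 0$ and, within the case $e_1 \neq 0$, to combine the state-space description already supplied by Feller's test with a classical integrability criterion for driftless diffusions that distinguishes true martingales from strict local martingales.

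First, when $e_1 = 0$, the coefficient $P(y) = e_2 y + e_3$ has at most linear growth, so a Gronwall-type estimate yields $\E[\sup_{s \leq T} Y_s^2] < \infty$ for every $T$, and hence $Y$ is a square-integrable (in particular, true) martingale; the stopped process $X = Y^S$ is then a true martingale by optional stopping. For $e_1 \neq 0$, the positive direction is handled by boundedness: Feller's test confines $Y$ to the connected component $I_{y_0}$ of $\R \setminus \{\text{real roots of } P\}$ containing $y_0$. If $P$ has real roots $r_1 \leq r_2$ with $y_0 \in [r_1, r_2]$, then $I_{y_0} \subseteq [r_1, r_2]$, so $Y$ is a bounded local martingale and hence a true martingale; similarly for $X$, whenever some real root $r \geq x_0$ exists, one has $I_{x_0} \subseteq (-\infty, r]$, and combining this with absorption at zero gives $X \in [0, r]$, yielding the true martingale property.

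The main obstacle is the converse direction, i.e.\ showing strict local martingality in the remaining configurations. My plan is to invoke the standard integrability criterion (in the spirit of Delbaen--Shirakawa) for a driftless SDE $dZ = \sigma(Z)\,dB$ on an interval whose upper endpoint is $+\infty$: such a $Z$ is a true martingale on each $[0,T]$ if and only if $\int^{\infty} z/\sigma(z)^2\,dz = +\infty$, together with the symmetric statement at $-\infty$. Here $\sigma(z)^2 = P(z)^2 \sim e_1^2 z^4$, so the integrand decays like $z^{-3}$ and the integral converges. Applied to $Y$, this rules out the true martingale property in every configuration where $I_{y_0}$ is unbounded above or below, i.e.\ exactly when $y_0$ does not lie between two real roots of $P$. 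Applied to $X$, viewed as the diffusion $dX = P(X)\,dB$ on $I_{x_0} \cap (0,\infty)$ with $0$ absorbing, it rules out the martingale property precisely when no real root of $P$ dominates $x_0$. The one mildly subtle subcase arises when the lower endpoint of $I_{x_0}$ is a root $r \in [0, x_0)$: here $Y$ never reaches $0$, so $S = \infty$, $X = Y$, and the same integrability test on $(r, \infty)$ still forces strict local martingality, in agreement with the statement of the proposition.
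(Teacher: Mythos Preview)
Your argument is correct but follows a genuinely different route from the paper. The paper defers the proof to Section~\ref{S closedness} and argues via the F\"ollmer change of measure: under the measure $\widehat{\Q}$ generated by $X$, the reciprocal $\widehat{X} = 1/X$ is again a stopped QNV process with polynomial $\widehat{P}(z) = -e_3 z^2 - e_2 z - e_1$ (Proposition~\ref{P closedness}), and $X$ is a strict $\Q$-local martingale precisely when $\widehat{X}$ can hit zero under $\widehat{\Q}$, which by Feller's test translates into the stated root condition on $P$; the $Y$-statement is then reduced to the $X$-statement by the reflection $\widetilde{Y} := r - Y$, itself a QNV process, to locate the second root. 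Your Kotani/Delbaen--Shirakawa integral test is more elementary and entirely self-contained once that criterion is granted, whereas the paper's route is deliberately internal, showcasing the change-of-num\'eraire stability that is one of its central themes. One minor point: in the case $I_{y_0} = \R$ (complex roots) you are implicitly invoking the two-sided Kotani version rather than Delbaen--Shirakawa proper; alternatively one can simply observe, as you already do for $X$, that if $Y$ were a true martingale then so would $X = Y^S$ be by optional stopping, and the one-sided test on $[0,\infty)$ already disposes of $X$.
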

The proposition is proved in Section~\ref{S closedness}.

\subsubsection*{Relevant literature}
An incomplete list of authors who study QNV
models in various degrees of generality consists of the following:
\begin{itemize}
    \item \citet{Rady_Sandmann}, \citet{Rady},
    \citet{Miltersen_1997}, \citet{Goldys_1997},  \citet{Ingersoll_bounded}, and \citet{Lipton_book}, who
    study the case when $X$ is bounded and thus a true martingale
    in the context of foreign exchange and interest rate markets; related is the class  of rational lognormal models introduced by Flesaker and Hughston \citep[see][]{FlesakerHughston, BrodyHughston};
    \item \citet{Albanese_BS_hyper} and \citet{Lipton_volsmile}, who derive prices for
    European calls on $X$ in the case
    when $P$ has one or two real roots;
    \item \citet{Zuehlsdorff_2001, Zuehlsdorff_2002}, \citet{Andersen}, and \citet{Chibane}, who compute
    prices for
    European calls and puts in the general case.
\end{itemize}
Most of these papers focus on deriving analytic expressions for the pricing of
European-style contingent claims. We refer the reader to \citet{Andersen} for the precise formulas
of European call and put prices.  In the following sections, we shall derive purely probabilistic methods to easily compute the price of any, possibly path-dependent, contingent claim.

\section{Connection to Wiener process}  \label{SS Wiener}
\citet{Bluman_transformation, Bluman_mapping},
\citet{CLM_reduction}, and \citet{Lipton_book} prove that the partial differential equations (PDEs) corresponding to the class of QNV models are the only parabolic PDEs that can be
reduced to the heat equation, via a certain set of transformations.  In this section, we derive a probabilistic
equivalent while, in particular, paying attention to the issues of strict local martingality.  More precisely, we shall see
that if one starts on a Wiener space equipped with a Brownian motion $W = \{W_t\}_{t \geq 0}$, and one is allowed 
\begin{enumerate}
	\item to stop $W$ at a stopping time $\tau$, yielding  $W^\tau =  \{W_t^\tau\}_{t \geq 0} :=  \{W_{t\wedge \tau}\}_{t \geq 0}$,
	\item to transform $W^\tau$ by a strictly increasing smooth function $f$, and
	\item to change the probability measure with a density process $Z = \{Z_t\}_{t \geq 0}$ of the form $Z_t = \widetilde{g}(t,W_t^\tau)$ for all $t \geq 0$ for some nonnegative measurable function $\widetilde{g}$,
\end{enumerate}
then $\{f(W^\tau_t)\}_{t \geq 0}$ is, under the new measure, a QNV process up to time $\tau$, given that it is a local martingale.

Our proof of Theorem~\ref{T QNV}, which shows this characterization of QNV processes, relies on the characterization of the solutions of three ordinary differential equations (ODEs).
The next lemma relates the solutions of these ODEs to each other:
\begin{lemma}[Three ODEs]  \label{L ODE}
    Fix $C,\mu_0,d,f_0 \in \R$ and $a, b \in [-\infty, \infty]$ with $a < 0 < b$ and let $\mu: (a, b) \rightarrow \R$  solve the ODE
    \begin{align}
        \mu^\prime(x) - \mu(x)^2 &= C,  \text{ }\quad \mu(0) = \mu_0. \label{E ODE0}
    \end{align}
	Then the functions $f, g:  (a, b) \rightarrow \R$,   defined by
	\begin{align}
	f(x) &:= d \int_0^x \exp\left(2\int_0^y \mu(z) \dd z\right) \dd y +  f_0, \nonumber\\
    	g(x) &:= \exp\left(-\int_0^x \mu(z) \dd z\right),   \label{E def g}
	\end{align}
	solve the ODEs
    \begin{align}   
        f^\prime(x) &= e_1 f(x)^2 + e_2 f(x) + e_3, \text{ } \quad f(0) = f_0,  \label{E ODEf}\\ 
        -g^{\prime\prime}(x) &= C g(x), \text{ } \quad g(0) = 1, \text{ } \quad g^\prime(0) = -\mu_0,  \label{E ODEg}
    \end{align}	
	respectively, for appropriate $e_1, e_2, e_3 \in \R$.
\end{lemma}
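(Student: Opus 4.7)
My plan is to establish the $g$ identity by direct differentiation and the $f$ identity by recognizing that $gf$ solves the same linear second-order ODE as $g$ and then extracting the quadratic Riccati relation via standard conservation laws. For $g$ itself, differentiating $g(x) = \exp(-\int_0^x \mu(z)\, \dd z)$ yields $g'(x) = -\mu(x) g(x)$, and differentiating once more while substituting $\mu' = \mu^2 + C$ gives $g''(x) = -\mu' g + \mu^2 g = -C g(x)$, with $g(0) = 1$ and $g'(0) = -\mu_0$ immediate from the definition.

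For $f$, I would first use the fundamental theorem of calculus to write $f'(x) = d \exp(2\int_0^x \mu(z)\, \dd z) = d / g(x)^2$. The pivotal observation is then that $u(x) := g(x) f(x)$ also solves $u'' + C u = 0$. To check this, expand $u'' = g'' f + 2 g' f' + g f''$: the first term is $-C u$ by the $g$ identity, while the middle two cancel, since $f'' = 2\mu f'$ (differentiating $f' = d/g^2$) together with $g' = -\mu g$ gives $2 g' f' + g f'' = -2\mu g f' + 2\mu g f' = 0$.

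With $g$ and $u$ both solutions of $y'' + C y = 0$, I would invoke the standard conservation law that for any two solutions $y_1, y_2$ of this equation, the quantity $y_1' y_2' + C y_1 y_2$ is constant (verified by differentiating and using the ODE). Applied to $(g, u)$ and simplified using $g' = -\mu g$, $f' = d/g^2$, and the single-solution conservation $(g')^2 + C g^2 = \mu_0^2 + C$, this yields the linear relation $(\mu_0^2 + C) f(x) - \mu(x) d = (\mu_0^2 + C) f_0 - \mu_0 d$. Applied to $(u, u)$ it gives that $(u')^2 + C u^2 = (\mu_0^2 + C) f^2 - 2\mu d f + d f'$ is constant; substituting the linear relation to eliminate $\mu d$ expresses $d f'$ as an explicit quadratic in $f$. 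Dividing by $d$ (in the trivial case $d = 0$, one simply notes $f \equiv f_0$ and sets $e_1 = e_2 = e_3 = 0$) produces the desired ODE with $e_1 = (\mu_0^2 + C)/d$, $e_2 = 2\mu_0 - 2 e_1 f_0$, and $e_3 = d - 2\mu_0 f_0 + e_1 f_0^2$, the condition $f(0) = f_0$ being automatic. The main conceptual obstacle is spotting that $u = gf$ solves the same linear second-order equation as $g$; once this is noticed, the rest is mechanical bookkeeping with the two conservation laws of $y'' + C y = 0$.
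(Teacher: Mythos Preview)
Your argument is correct. Both the verification of the $g$ equation and the chain of computations leading to the quadratic ODE for $f$ check out; in particular, the observation that $u = gf$ satisfies $u'' + Cu = 0$ is sound, the bilinear invariant $y_1'y_2' + C y_1 y_2$ is indeed constant for any two solutions, and your evaluation of it on the pairs $(g,g)$, $(g,u)$, $(u,u)$ yields precisely $(\mu_0^2+C)f' = (\mu_0^2+C)\bigl[(f-f_0)^2 (\mu_0^2+C)/d + 2\mu_0(f-f_0) + d\bigr]$, with the stated coefficients.

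Your route is genuinely different from the paper's. The paper splits into the cases $\mu_0^2 = -C$ (where $\mu$ is constant and $f$ is written down explicitly) and $\mu_0^2 \neq -C$ (where one integrates $2\mu = (\log(\mu^2+C))'$ directly to obtain $\exp\bigl(2\int_0^x \mu\bigr) = \mu'(x)/(\mu_0^2+C)$, so that $f$ is an affine function of $\mu$ and inherits a Riccati equation from $\mu$). Your approach instead exploits the classical reduction of Riccati to linear second order: since $g$ and $u=gf$ both solve $y''+Cy=0$ (with Wronskian $g^2 f' = d$), the quadratic first integrals of that linear equation encode the Riccati relation for $f$. The payoff of your method is a cleaner case structure (only $d=0$ versus $d\neq 0$) and explicit formulas for $e_1,e_2,e_3$ that are uniform in $C$ and $\mu_0$; the paper's direct integration is shorter and more elementary, but requires the additional case analysis on $\mu_0^2+C$ and on $\mu_0$.
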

\begin{proof}
    The ODE in \eqref{E ODEg} can be checked easily. To show \eqref{E ODEf},  first consider the case $\mu_0^2  = -C$.  The uniqueness of solutions to \eqref{E ODE0} yields that $\mu \equiv \mu_0$; see, for example, Section~8.2 in \citet{Hirsch}. If $\mu_0 \neq 0$, then $f(x) = d(\exp(2 \mu_0 x) - 1)/(2\mu_0) + f_0$, and if $\mu_0=0$, then
$f(x) = dx + f_0$ both satisfy  \eqref{E ODEf}.  Consider now the case $\mu_0^2  \neq -C$ and observe that then $\mu(x)^2  \neq -C$ for all $x \in (a,b)$, again by a similar uniqueness argument. We obtain that
	\begin{align*}
        \log\left(\frac{\mu^\prime(x)}{\mu^2_0 + C}\right)
        = \log\left(\frac{\mu(x)^2+C}{\mu^2_0 + C}\right) =  \int_0^x \frac{2 \mu(z) \mu^\prime(z)}{\mu(z)^2 + C} \dd z = 2 \int_0^x \mu(z) \dd z.
    \end{align*}
	Therefore, $f(x) = d(\mu(x)  - \mu_0)/(\mu^2_0 + C)  + f_0$, which satisfies \eqref{E ODEf}.
\end{proof}

The next lemma provides the full set of solutions for the ODEs in \eqref{E ODEf} and \eqref{E ODEg}:
\begin{lemma}[Solutions of ODEs]
\label{L functionsODE}
	Fix $e_1, e_2, e_3, f_0 \in \R$ and set $\mu_0 =e_1 f_0 + e_2/2$,  $C = e_1 e_3 - e_2^2/4$, $r_1 = (-e_2/2 -\sqrt{|C|})/e_1$, and $r_2 = (-e_2/2 +\sqrt{|C|})/e_1$. Then the ODE in \eqref{E ODEf} has a unique solution $f$, which is defined in a neighborhood $(a,b)$ around zero, with $a,b \in [-\infty, \infty]$ and $a<0<b$, such that 
	\begin{itemize}
		\item $\lim_{x \downarrow a} |f(x)| = \infty$ if $a>-\infty$, 
		\item $\lim_{x \uparrow b} |f(x)| = \infty$ if $b < \infty$, 
		\item $f^\prime(x) \neq 0$ for all $x \in (a,b)$, and 
		\item $f$ is infinitely differentiable in $(a,b)$.
	\end{itemize}
Furthermore, the function $\mu: (a,b) \rightarrow \R$, defined by
    \begin{align} \label{E fMu}
        \mu(x) := \frac{1}{2} \frac{f^{\prime\prime}(x)}{f^{\prime}(x)} = e_1 f(x) + \frac{e_2}{2},
    \end{align}
	satisfies \eqref{E ODE0}.  The function $g: (a,b) \rightarrow  \R$, defined as in \eqref{E def g}, satisfies $\lim_{x \downarrow a} g(x) = 0$ if $a>-\infty$ and $\lim_{x \uparrow b} g(x) = 0$ if $b < \infty$;  thus, its domain can be extended to $[a,b] \cap \R$.
In explicit form, the functions $g$ and $f$  are as follows:
	\begin{itemize}
		\item if $e_1 = 0$:
			\begin{align*}
        g(x) &=  \exp\left(- \frac{e_2x}{2}\right),\\
        f(x) &=  \left(f_0 + \frac{e_3}{e_2}\right) \exp(e_2 x) - \frac{e_3}{e_2}\quad
            (\text{if } e_2 \neq 0) \quad \text { or } \quad f(x) = e_3 x +f_0 \quad (\text{if } e_2 = 0);
		\end{align*}
		\item if $C = 0$ and $e_1 \neq 0$:\footnote{This condition is equivalent to the condition that the polynomial $P$, defined by $P(z) = e_1 z^2 + e_2 z + e_3$, has a double root $r \in \R$. The quadratic formula then yields that $r = r_1=r_2$.}
			\begin{align*}
        g(x) &= 1-\mu_0 x = 1- \left(e_1 f_0 + \frac{e_2}{2}\right) x,\\
        f(x) &=   \left(f_0 - r_1\right) \frac{1}{1-\mu_0 x} + r_1 = \left(f_0 + \frac{e_2}{2e_1}\right) \frac{1}{1-\left(e_1 f_0 + \frac{e_2}{2}\right) x} - \frac{e_2}{2 e_1};
		\end{align*}
		\item  if $C < 0$ and $e_1 \neq 0$:\footnote{This condition is equivalent to the condition that the polynomial $P$, defined as above, has two real roots, which are exactly at $r_1, r_2$, by the quadratic formula.}
			\begin{itemize}
				\item and, additionally, $f_0 \in (r_1 \wedge r_2, r_1 \vee r_2)$:\footnote{This condition is equivalent to $\mu_0 \in (-\sqrt{-C}, \sqrt{-C})$.}
			\begin{align*}
        g(x) &= \frac{\cosh\left(\sqrt{-C}x + c\right)}{\cosh(c)} = \frac{\cosh\left(\sqrt{\frac{e_2^2}{4}- e_1 e_3}x + c\right)}{\cosh(c)},\\
        f(x) &=  \frac{-\sqrt{-C}}{e_1} \tanh\left(\sqrt{-C}x + c\right) - \frac{e_2}{2 e_1} \\&= \frac{-\sqrt{\frac{e_2^2}{4} - e_1 e_3}}{e_1} \tanh\left(\sqrt{\frac{e_2^2}{4} - e_1 e_3}x + c\right) - \frac{e_2}{2 e_1}
		\end{align*}
				with 
			\begin{align*}
				c := \mathrm{arctanh}\left(\frac{-\mu_0}{\sqrt{-C}}  \right) = \frac{1}{2} \log\left(\frac{\sqrt{-C} - \mu_0}{\sqrt{-C} + \mu_0}\right);  
			\end{align*}
				\item and, additionally, $f_0 \notin [r_1 \wedge r_2, r_1 \vee r_2]$:\footnote{This condition is equivalent to $\mu_0 \notin [-\sqrt{-C}, \sqrt{-C}]$.}
			\begin{align*}
        g(x) &= \frac{\sinh\left(\sqrt{-C}x + c\right)}{\sinh(c)} = \frac{\sinh\left(\sqrt{\frac{e_2^2}{4} - e_1 e_3}x + c\right)}{\sinh(c)},\\
        f(x) &=  \frac{-\sqrt{-C}}{e_1} \coth\left(\sqrt{-C}x + c\right) - \frac{e_2}{2 e_1} \\&= \frac{-\sqrt{\frac{e_2^2}{4} - e_1 e_3}}{e_1} \coth\left(\sqrt{\frac{e_2^2}{4}- e_1 e_3}x + c\right) - \frac{e_2}{2 e_1}
		\end{align*}
				with 
			\begin{align*}
				 c := \mathrm{arccoth}\left(\frac{-\mu_0}{\sqrt{-C}}  \right) = \frac{1}{2} \log\left(\frac{-\sqrt{-C} + \mu_0}{\sqrt{-C} + \mu_0}\right)  ;
			\end{align*}
				\item and, additionally, $f_0 \in \{r_1 \wedge r_2, r_1 \vee r_2\}$:\footnote{This condition is equivalent to $\mu_0 \in \{-\sqrt{-C}, \sqrt{-C}\}$.}
			\begin{align*}
        g(x) &=\exp(-\mu_0 x),\\
        f(x) &\equiv f_0;
		\end{align*}
		\end{itemize}
\item if $C > 0$:\footnote{This condition is equivalent to the condition that the polynomial $P$, defined as above, has no real roots.}
			\begin{align*}
        g(x) &= \frac{\cos\left(\sqrt{C}x + c\right)}{\cos(c)} = \frac{\cos\left(\sqrt{e_1 e_3 - \frac{e_2^2}{4} }x + c\right)}{\cos(c)},\\
        f(x) &=  \frac{\sqrt{C}}{e_1} \tan\left(\sqrt{C}x + c\right) - \frac{e_2}{2 e_1} = \frac{\sqrt{e_1 e_3 - \frac{e_2^2}{4} }}{e_1} \tan\left(\sqrt{e_1 e_3 - \frac{e_2^2}{4}}x + c\right) - \frac{e_2}{2 e_1}
		\end{align*}
				with 
			$
				c := \arctan({\mu_0}/{\sqrt{C}} ).  
			$
	\end{itemize}
\end{lemma}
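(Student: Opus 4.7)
My plan is to reduce everything to the autonomous Riccati equation \eqref{E ODE0} for $\mu$, integrate it in closed form, and then read off $f$ and $g$. To begin, Picard--Lindel\"of applied to the polynomial vector field $F(z) = e_1 z^2 + e_2 z + e_3$ produces a unique maximal $C^\infty$ solution $f$ of \eqref{E ODEf} on an open interval $(a,b) \ni 0$, and standard continuation forces $|f|$ to blow up at any finite endpoint. The non-vanishing of $f'$ away from the degenerate constant branch follows from the same uniqueness: if $f(x_0)$ were a root of $P(z) := e_1 z^2 + e_2 z + e_3$, then $f \equiv f(x_0)$ would be forced, contradicting $f(0) = f_0$ unless $f$ is already constant. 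Differentiating \eqref{E ODEf} once gives $f'' = (2 e_1 f + e_2) f' = 2\mu f'$, which is exactly \eqref{E fMu}, and a direct computation then yields $\mu' = e_1 f' = (e_1 f + e_2/2)^2 + (e_1 e_3 - e_2^2/4) = \mu^2 + C$ with $\mu(0) = \mu_0$, so $\mu$ solves \eqref{E ODE0}.

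The main work is the explicit integration of $\mu' = \mu^2 + C$ by separation of variables, split according to the sign of $C$ and, when $C < 0$, to the position of $\mu_0^2$ relative to $-C$. The antiderivative of $1/(\mu^2 + C)$ is $(1/\sqrt{C})\arctan(\mu/\sqrt{C})$ when $C > 0$, $(1/\sqrt{-C})\mathrm{arctanh}(\mu/\sqrt{-C})$ when $C < 0$ and $|\mu_0| < \sqrt{-C}$, $(1/\sqrt{-C})\mathrm{arccoth}(\mu/\sqrt{-C})$ when $C < 0$ and $|\mu_0| > \sqrt{-C}$, and $-1/\mu$ when $C = 0$; the borderline $|\mu_0| = \sqrt{-C}$ forces $\mu \equiv \mu_0$ by the uniqueness argument already used in the proof of Lemma~\ref{L ODE}. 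In each branch the initial condition $\mu(0) = \mu_0$ pins down the shift $c$ to exactly the inverse trigonometric or hyperbolic value listed in the statement. Setting $f = (\mu - e_2/2)/e_1$ when $e_1 \neq 0$, and directly integrating the linear ODE $f' = e_2 f + e_3$ when $e_1 = 0$, then produces the claimed $\tan$, $\tanh$, $\coth$, rational, or exponential forms of $f$; substituting into $g(x) = \exp(-\int_0^x \mu(z) \, \dd z)$ produces the corresponding $\cos$, $\cosh$, $\sinh$, linear, or exponential forms of $g$. The finite endpoints $a, b$ coincide with the first singularities of the $\tan$, $\tanh$, or $\coth$ expressions for $f$, and the numerator of $g$ vanishes at each such point, giving $\lim_{x \downarrow a} g(x) = 0$ and $\lim_{x \uparrow b} g(x) = 0$.

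The main obstacle is bookkeeping rather than conceptual difficulty: one must match the constant $c$ to $\mu_0$ correctly in each of the seven sub-cases, track the signs of $\sqrt{|C|}$ and of $e_1$ when passing from $\mu$ back to $f$, and verify that the explicit blow-up points of the trigonometric and hyperbolic formulas agree with the maximal existence interval furnished by Picard--Lindel\"of. No further idea beyond Lemma~\ref{L ODE} and the classical Riccati integration is required.
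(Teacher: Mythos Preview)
Your approach is correct and slightly more constructive than the paper's. The paper's proof is essentially two sentences: existence and uniqueness of $f$ on a maximal interval is cited from standard ODE theory (Hirsch, Section~8.2), and then everything else---that $\mu$ satisfies \eqref{E ODE0}, the explicit formulas for $f$ and $g$, and the boundary behaviour---is dismissed as ``can be easily checked,'' meaning one plugs the stated formulas into the ODEs and initial conditions and verifies them case by case.

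You instead \emph{derive} the formulas: you first reduce to the constant--coefficient Riccati equation $\mu' = \mu^2 + C$ via \eqref{E fMu}, integrate it by separation of variables in each of the sign/position sub-cases, and then recover $f = (\mu - e_2/2)/e_1$ (when $e_1 \neq 0$) and $g = \exp(-\int \mu)$. This is exactly how one would discover the formulas in the first place, and it makes transparent why the finite endpoints of $(a,b)$ coincide with the zeros of the numerator of $g$ (both are governed by the singularities of $\tan$, $\coth$, or $1/(1-\mu_0 x)$). The paper's verification route is shorter once the formulas are on the table, but yours explains their origin and handles the blow-up and non-vanishing claims more explicitly than a bare citation. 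No gap; the bookkeeping you flag is the only real cost.
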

\begin{proof}
    The existence and uniqueness of solutions of the ODE in \eqref{E ODEf} follow  from standard arguments in the
    theory of ODEs; see Section~8.2 in \citet{Hirsch}. The fact that $\mu$ satisfies the corresponding ODE and the other statements can be  checked easily.
\end{proof}

The description of the solutions of the ODEs in \eqref{E ODEf} and \eqref{E ODEg} is
the fundamental step in proving the following theorem, which characterizes QNV processes as
the only local martingales that can be simulated from stopped Brownian motion by a certain set of transformations:
\begin{thm}[QNV process and Brownian motion]  \label{T QNV}
 	Let $\Omega = C([0,\infty), \R)$ be the set of  continuous paths $\omega: [0, \infty) \rightarrow \R$.
	Let $\mathbb{F} = \{{\CF}_t\}_{t \geq 0}$ denote the filtration generated by the canonical process $W$, defined by $W_t(\omega) := \omega(t)$ for all $t \geq 0$,  and set ${\CF} = \bigvee_{t \geq 0} {\CF}_t$. Let ${\Prob}$ denote the Wiener measure on $({\Omega}, {\CF})$. 
Then the following statements hold:
	\begin{enumerate}
		\item    Fix $a,b \in [-\infty, \infty]$ with $a < 0<b$ and let $f: [a,b] \rightarrow [-\infty, \infty]$ and $\widetilde{g}: [0,\infty) \times [a,b] \rightarrow [0,\infty)$ denote
    two measurable functions with  $\widetilde{g}(0,0) = 1$, $\lim_{x \downarrow a} |f(x)| = \infty = |f(a)|$ if $a>-\infty$, and $\lim_{x \uparrow b} |f(x)| = \infty = |f(b)|$ if $b<\infty$. Assume that $f$ is three times continuously differentiable in $(a,b)$ and that $f^\prime(x) \neq 0$ for all $x\in (a,b)$.
 Define the $\mathbb{F}$-stopping time $\tau$ by
    \begin{align}  \label{E t1 tau}
        \tau := \inf \{t \geq 0| W_t \notin (a,b)\}, \text{ } \inf \emptyset := \infty,
    \end{align}
    and the processes $Y = \{Y_t\}_{t \geq 0}$ and $Z = \{Z_t\}_{t \geq 0}$ by $Y_t := f(W_t^\tau)$ and $Z_t := \widetilde{g}(t, W^\tau_t)$ for all $t \geq 0$. Assume that $Z$ is a (nonnegative) 
	$\Prob$-martingale and that $Y$ is a $\Q$-local martingale, where $\Q$ denotes the unique probability measure on $(\Omega, \bigvee_{t \geq 0} \CF_t)$ satisfying $\dd {\Q} / \dd \Prob|_{\CF_t} = Z_t$ for all $t \geq 0$.\footnote{See page~192 in \citet{KS1}.}

    Then $Y$ under $\Q$ satisfies
    \begin{align}  \label{E QNV UBracket}
        \dd Y_t = (e_1 Y^2_t + e_2 Y_t + e_3) \dd B_t
    \end{align}
    for all  $t \geq 0$ and for some $e_1,e_2,e_3\in \R$  and a $\Q$-Brownian motion $B = \{B_t\}_{t \geq 0}$; to wit, $Y$ is a $\Q$-QNV process.
    Furthermore, the corresponding density process $Z$ is of the form $Z_t =  \exp(C (t \wedge \tau) /2) g(W_t^\tau) = \exp(C t/2) g(W_t^\tau)$ for all $t \geq 0$, where 
	 $g$ is explicitly computed in Lemma~\ref{L functionsODE} (with the corresponding constants $e_1,e_2,e_3$ and $f_0 = f(0)$) and $C = e_1 e_3 - e_2^2/4$.	
	\item Conversely, for any $e_1,e_2,e_3, f_0 \in \R$
     there exist $a, b \in [-\infty, \infty]$ with $a < 0 < b$ and measurable functions $f: [a,b] \cap \R \rightarrow [-\infty, \infty]$ and $g: [a,b] \cap \R \rightarrow [0,\infty)$ such that the following hold:
		\begin{itemize}
			\item $f$ is infinitely differentiable in 
				$(a, b)$, $f^\prime(x) \neq 0$ for all $x \in (a,b)$, $\lim_{x \downarrow a} |f(x)| = \infty$ if $a>-\infty$, and $\lim_{x \uparrow b} |f(x)| = \infty$ if $b<\infty$;
			\item the process $Z = \{Z_t\}_{t \geq 0}$, defined by $Z_t = \exp(Ct/2) g(W_t^\tau)$ for all $t \geq 0$ with $C = e_1 e_3 - e_2^2/4$ and $\tau$ as in \eqref{E t1 tau}, is a $\Prob$-martingale and generates a probability measure $\Q$ on  $(\Omega, \bigvee_{t \geq 0} \CF_t)$;
			\item the process $Y = \{Y_t\}_{t \geq 0}$, defined by $Y_t = f(W_t^\tau)$ for all $t \geq 0$, has under $\Q$ the dynamics  in \eqref{E QNV UBracket}, for some $\Q$-Brownian motion $B = \{B_t\}_{t \geq 0}$, and satisfies $Y_0 = f_0$.
		\end{itemize}
	\end{enumerate}
\end{thm}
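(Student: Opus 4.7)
The proof consists of two directions. For the first, the plan is to combine It\^o's formula with Girsanov's theorem and then exploit separation of variables. Applying It\^o's formula to $Z_t=\widetilde{g}(t,W_t^\tau)$ and using that $Z$ is a $\Prob$-martingale forces $\widetilde{g}$ to satisfy the backward heat equation $\widetilde{g}_t + \tfrac{1}{2}\widetilde{g}_{xx}=0$ on $[0,\infty)\times(a,b)$. By Girsanov, $B_t := W_t - \int_0^t (\widetilde{g}_x/\widetilde{g})(s,W_s^\tau)\mathbf{1}_{\{s<\tau\}}\,\dd s$ is a $\Q$-Brownian motion. Applying It\^o to $Y_t=f(W_t^\tau)$ and computing the $\Q$-drift as $f'(W_t^\tau)(\widetilde{g}_x/\widetilde{g})(t,W_t^\tau)+\tfrac{1}{2}f''(W_t^\tau)$, the local-martingale hypothesis on $Y$ together with $f'\neq 0$ on $(a,b)$ forces
\begin{align*}
	\frac{\widetilde{g}_x(t,x)}{\widetilde{g}(t,x)} = -\mu(x), \quad \mu(x):=\frac{1}{2}\frac{f''(x)}{f'(x)},
\end{align*}
where crucially the left-hand side depends only on $x$.

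Integrating in $x$ yields the separation $\widetilde{g}(t,x)=h(t)g(x)$ with $g(x)=\exp(-\int_0^x\mu(z)\,\dd z)$ and $h(0)=1$. Plugging back into the heat equation, $h'/h$ and $-g''/(2g)$ must be a common constant $C/2$, so $h(t)=\exp(Ct/2)$ and $-g''=Cg$, which is \eqref{E ODEg}. Differentiating $\mu=-g'/g$ produces $\mu'-\mu^2=C$, which is \eqref{E ODE0}. Since $f''/f'=2\mu$, integration gives $f$ in the form appearing in Lemma~\ref{L ODE} with $d=f'(0)$, and that lemma then hands us $e_1,e_2,e_3\in\R$ with $f'(x)=e_1f^2(x)+e_2f(x)+e_3$. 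Since $\dd Y_t = f'(W_t^\tau)\,\dd B_t$ on $\{t<\tau\}$, the QNV dynamics \eqref{E QNV UBracket} follow.

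For the converse, the plan is to read off $f$ and $g$ together with the maximal interval $(a,b)$ directly from Lemma~\ref{L functionsODE}, and then reverse the computation above. The process $Z_t := \exp(Ct/2)g(W_t^\tau)$ is a nonnegative $\Prob$-local martingale by It\^o and $-g''=Cg$; the main obstacle will be upgrading it to a true martingale, which I expect to be precisely the point at which the technical result deferred to the appendix of the paper is invoked (the boundary behaviour $g(a)=g(b)=0$ from Lemma~\ref{L functionsODE} together with standard Brownian exit-time estimates handle the cases where $(a,b)$ is bounded, with a more delicate argument needed when $a=-\infty$ or $b=\infty$). Once $Z$ is a true martingale, Girsanov produces $B$, and an It\^o computation using $f''=2\mu f'$ and $\mu=-g'/g$ shows the $\Q$-drift of $Y=f(W^\tau)$ vanishes. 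The ODE $f'(x)=e_1 f^2(x)+e_2 f(x)+e_3$ from Lemma~\ref{L functionsODE} then converts $\dd Y_t=f'(W_t^\tau)\,\dd B_t$ into the QNV form, and $Y_0=f(0)=f_0$ holds by construction.
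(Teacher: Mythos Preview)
Your argument for Part~1 has a genuine gap: you apply It\^o's formula to $\widetilde{g}(t,W_t^\tau)$ and freely write $\widetilde{g}_t$, $\widetilde{g}_x$, $\widetilde{g}_{xx}$, but the hypothesis only says $\widetilde{g}$ is \emph{measurable}. Nothing in the statement gives you $C^{1,2}$-regularity, so your heat-equation derivation and the separation $\widetilde{g}(t,x)=h(t)g(x)$ are not justified as written. The paper circumvents this entirely: it never differentiates $\widetilde{g}$. Instead it uses the martingale representation theorem to write $\dd Z_t=-Z_t\,\tilde\mu_t\,\dd W_t$ for some progressively measurable $\tilde\mu$, then applies It\^o to $f(W_t)$ (where $f$ \emph{is} assumed $C^3$) and reads off $\tilde\mu_t=\mu(W_t)$ from the uniqueness of the Doob--Meyer decomposition of the $\Q$-local martingale $Y$. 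The identity $\log Z_t=-\eta(W_t)+\tfrac12\int_0^t(\mu'-\mu^2)(W_s)\,\dd s$ then follows from It\^o applied to $\log$ and to the antiderivative $\eta$ of $\mu$ (both legitimate), and the crucial step---forcing $\mu'-\mu^2\equiv C$---is exactly the appendix lemma on path-independence of Brownian integrals. In other words, the technical appendix result is the substitute for your separation-of-variables step in Part~1, not a tool for Part~2.

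For Part~2 you have the location of the difficulty wrong. The appendix lemma plays no role there; the martingality of $Z$ is handled by an elementary case split. When $e_1=0$, $Z$ is (stopped) geometric Brownian motion; when $C\ge 0$, $g$ vanishes at both finite endpoints of $(a,b)$ and $Z$ is a bounded local martingale, hence a martingale; when $C<0$, one writes $g$ via $\cosh$ or $\sinh$ as a linear combination of two exponentials, making $Z$ a sum of two geometric Brownian motions, each a true martingale. Once $Z$ is a martingale, your reversal of the It\^o/Girsanov computation is correct and matches the paper.
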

\begin{proof}
	We start with the proof of the first statement. 
	Observe that zero is an absorbing point of
    the martingale $Z$ since it is also a nonnegative supermartingale. Thus, by the martingale representation theorem \citep[see, for
    example, Theorem~III.4.33 of][]{JacodS}, there
    exists some progressively measurable process $\tilde{\mu} = \{\tilde{\mu_t}\}_{t \geq 0}$
    such that the dynamics of $Z$ can be described as $\dd Z_t
    = -  Z_t \tilde{\mu}_t \dd W_t$
    for all $t \geq 0$. 	Then Lenglart's extension of Girsanov's theorem implies that 
    the process $B = \{B_t\}_{t \geq 0}$, defined by
    $B_t := W_t + \int_0^t\tilde{\mu}_s \dd s$ for all $t \geq 0$, is a $\Q$-Brownian motion; see Theorem~VIII.1.12 in \citet{RY}.

    It\^o's formula yields that
    \begin{align} \label{E QNVIto}
        \dd Y_t = \dd f(W_t) = f^\prime(W_t)
            \dd W_t+ \frac{1}{2}f^{\prime\prime}(W_t) \dd t
            = f^\prime(W_t)
            \dd B_t+ \left(\frac{1}{2}f^{\prime\prime}(W_t)
            - f^\prime(W_t) \tilde{\mu}_t\right) \dd t
    \end{align}
    for all $t < \tau$.
    Now, the uniqueness of the Doob--Meyer decomposition
    \citep[see Theorem III.16 in][]{Protter}, in conjunction with \eqref{E QNVIto} and the fact that $Y$ is a $\Q$-local martingale, implies that
    $\tilde{\mu}_t = \mu(W_t)$ for all $t < \tau$, where we have defined
    \begin{align*}
        \mu(x) := \frac{1}{2} \frac{f^{\prime\prime}(x)}{f^{\prime}(x)}
    \end{align*}
    for all $x \in (a,b)$.
    By the assumption on $f$, the function $\mu$ is continuously differentiable in $(a,b)$.

    Then another application of It\^o's formula yields that
    \begin{align}  \label{E QNV g}
        \log(\widetilde{g}(t,W_t)) =  - \int_0^t \mu(W_s) \dd W_s - \frac {1}{2}
        \int_0^t \mu(W_s)^2 \dd s = -\eta(W_t) + \frac{1}{2}
        \int_0^t \left(\mu^\prime(W_s) - \mu(W_s)^2 \right) \dd s
    \end{align}
    for all $t < \tau$, where $\eta: (a,b)
    \rightarrow (-\infty, \infty)$ is defined as $\eta(x) =
    \int_0^x \mu(y) \dd y$ for all $x \in (a,b)$.
 Then
    Lemma~\ref{L pathindepence} in Appendix~\ref{A technical} yields
    that $\mu$ satisfies the ODE of \eqref{E ODE0} in $(a,b)$ for some $C \in \R$. 
	Lemma~\ref{L ODE} implies that $f$ solves the ODE of \eqref{E ODEf}; in conjunction with \eqref{E QNVIto}, this yields \eqref{E QNV UBracket}. Lemma~\ref{L functionsODE} yields that $C = e_1 e_3 - e_2^2/4$.
 The expression for $\widetilde{g}$ in \eqref{E QNV g} implies 
 that $Z$ is of the
    claimed form, first for all $t < \tau$ and then for $t = \tau$ since $Z_\tau = 0$ if $\tau < \infty$ because $\Q(Y_t \in \R \text{ for all } t \geq 0) = 1$. 

    For the converse direction, fix $e_1, e_2, e_3, f_0$ and apply Lemma~\ref{L functionsODE}. Define $Z$ as in the statement with the corresponding  function $g$, computed in Lemma~\ref{L functionsODE}.
	Assume for a moment that the corresponding
    change of measure exists, that is, $Z$ is a $\Prob$-martingale.  The computations in \eqref{E QNVIto}  then imply that $Y$ is a $\Q$-local martingale  with dynamics  as in \eqref{E QNV UBracket}, where $\Q$ is as in the statement.

	Thus, it remains to show that $Z$ is a $\Prob$-martingale.
This is clear in the cases $e_1 = 0$ and $C\geq 0$ since $Z$ then is either a stopped $\Prob$-(geometric) Brownian motion or a bounded $\Prob$-martingale by It\^o's formula. We conclude by writing $Z$ as the sum of two true martingales, in the case $C<0$, using $\sinh(x) := (\exp(x) - \exp(-x))/2$
    and $\cosh(x) := (\exp(x) + \exp(-x))/2$.
\end{proof}

The next corollary concludes this discussion by illustrating how expectations of path-dependent
functionals in QNV models can be computed. Here and in the following, we shall always assume $\infty \cdot \1_A(\omega) = 0$ if $\omega \notin A$ for any $A \in \CF$, for the sake of notation.
\begin{cor}[Computation of expectations in QNV models]  \label{C computation}
    Fix $T>0$, $y_0 > 0$, $e_1, e_2, e_3 \in \R$, and $C = e_1 e_3 - e_2^2/4$  and let $h:C([0,T],\R) \rightarrow [0,\infty]$ denote any nonnegative measurable function of continuous paths. 
	Let  $Y$ ($X$) denote a (stopped) QNV process with polynomial $P(z) = e_1 z^2 + e_2 z + e_3$ and $Y_0 = y_0$ ($=X_0$) and $W = \{W_t\}_{t\geq 0}$ a Brownian motion.

	Then there exist functions $f,g$ and a stopping time $\tau$ (adapted to the filtration generated by $W$) such that
    \begin{align*}
        \E[h(\{Y_t\}\tInd)] &= \E\left[h\left(\{f(W_t^\tau)\}\tInd\right) \1_{\{\tau > T\}} \exp\left(\frac{CT}{2}\right) g\left(W_T^\tau\right)\right],\\
        \E[h(\{X_t\}\tInd)] &= \E\left[h\left(\{f(W_t^{\tau \wedge S})\}\tInd\right) \1_{\{\tau > T \wedge S\}} \exp\left(\frac{C(T\wedge S)}{2}\right) g\left(W_T^{\tau \wedge S}\right)\right],
    \end{align*}
    where $S$ denotes the first hitting time of zero by the process $\{f(W_t)\}_{t \geq 0}$.
\end{cor}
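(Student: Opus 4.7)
The strategy is to apply the converse direction of Theorem~\ref{T QNV} in order to rewrite the $\Q$-expectation of $h$ applied to the QNV path as a $\Prob$-expectation on Wiener space, with the Radon-Nikodym density supplied explicitly by the theorem.

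First, I apply Theorem~\ref{T QNV}(2) with the given $e_1,e_2,e_3$ and $f_0 := y_0$. This yields endpoints $a<0<b$, the functions $f$ and $g$ (as computed explicitly in Lemma~\ref{L functionsODE}), the exit time $\tau = \inf\{t\geq 0 : W_t\notin(a,b)\}$, and the $\Prob$-martingale $Z_t = \exp(Ct/2)\, g(W_t^\tau)$ generating a probability measure $\widetilde{\Q}$ on $(\Omega,\bigvee_{t\geq 0}\CF_t^W)$. Under $\widetilde{\Q}$, the process $\{f(W_t^\tau)\}_{t\geq 0}$ solves the QNV SDE \eqref{E qnv X} with initial value $y_0$. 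Since that SDE admits a unique strong solution (Problem~3.3.2 in \citet{McKean_1969}), its law is unique, so the $\widetilde{\Q}$-law of $\{f(W_t^\tau)\}\tInd$ on $C([0,T],\R)$ coincides with the law of $Y$.

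Second, writing the $\widetilde\Q$-expectation as a $\Prob$-expectation via $Z_T$ gives
\begin{align*}
	\E\left[h(\{Y_t\}\tInd)\right]
	= \E^{\widetilde{\Q}}\left[h(\{f(W_t^\tau)\}\tInd)\right]
	= \E\left[h(\{f(W_t^\tau)\}\tInd)\, \exp(CT/2)\, g(W_T^\tau)\right].
\end{align*}
On $\{\tau\leq T\}$ one has $W_T^\tau = W_\tau \in \{a,b\}$, and Lemma~\ref{L functionsODE} records that $g$ vanishes at any finite endpoint of $(a,b)$; hence the integrand vanishes on $\{\tau\leq T\}$, using the convention $\infty\cdot\1_A(\omega)=0$ for $\omega\notin A$ on those paths where $f$ would explode. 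This justifies inserting the indicator $\1_{\{\tau>T\}}$ and yields the first formula.

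For the stopped process $X=Y^S$, the same law-identification gives that $\{X_t\}\tInd$ under its measure has the same distribution as $\{f(W_{t\wedge\tau\wedge S})\}\tInd$ under $\widetilde{\Q}$, where $S$ is now interpreted on Wiener space as the first hitting time of $f^{-1}(\{0\})$ by $W$ (equivalently, the first hitting time of zero by $\{f(W_t)\}$, which occurs strictly before $\tau$ when $S<\infty$). Applying optional sampling to the bounded stopping time $T\wedge S$ and the $\Prob$-martingale $Z$ gives $\E^{\widetilde{\Q}}[F] = \E[Z_{T\wedge S}F]$ for nonnegative $\CF_{T\wedge S}^W$-measurable $F$, with $Z_{T\wedge S} = \exp(C(T\wedge S)/2)\, g(W_{T\wedge S}^\tau)$. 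The same boundary-vanishing of $g$ justifies the indicator $\1_{\{\tau>T\wedge S\}}$, yielding the second formula. The only delicate points are the boundary behavior of $g$ (needed to insert the indicator) and the cross-identification of $S$ on the two spaces; everything else is bookkeeping with the explicit density from Theorem~\ref{T QNV}.
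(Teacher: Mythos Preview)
Your proof is correct and follows the same route as the paper's: invoke the converse direction of Theorem~\ref{T QNV}, identify laws via strong uniqueness of the QNV SDE, and use the vanishing of $g$ at finite endpoints of $(a,b)$ --- which is precisely the paper's observation that $\{\tau>T\}=\{g(W_T^\tau)>0\}$ --- to justify the indicator. The paper's (two-line) argument additionally mentions treating bounded $h$ first and then passing to the limit, which you implicitly bypass since the change-of-measure identity $\E^{\widetilde{\Q}}[F]=\E[Z_\sigma F]$ already holds for all nonnegative $\CF_\sigma$-measurable $F$; the only cosmetic point is that in your displayed equality the integrand $h(\{f(W_t^\tau)\}\tInd)$ is not defined on $\{\tau\le T\}$, so it is cleaner to insert $\1_{\{\tau>T\}}$ (using $\widetilde{\Q}(\tau>T)=1$) \emph{before} unwinding the change of measure rather than after.
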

\begin{proof}
    The statement follows directly from Theorem~\ref{T QNV}, first for bounded $h$ and then
    for any nonnegative $h$, by taking the limit after observing that, in the notation of Theorem~\ref{T QNV},
    $\{\tau > T\} = \{g(W_T^\tau) > 0\}$.
\end{proof}
With the notation of the last corollary and Lemma~\ref{L functionsODE}, let us define $\underline{W}_T := \min_{t \in [0,T]} W_t$ and $\overline{W}_T := \max_{t \in [0,T]} W_t$. Then
the event $E_1 := \{\tau > T\}$  can be represented as $\{a < \underline{W}_T <\overline{W}_T < b\}$. Without loss of generality assuming $\mu_0 \geq 0$,  we have the following:
\begin{itemize}
	\item  $E_1 = \Omega$ if $e_1 = 0$ or both $C<0$ and $\mu_0 \in [-\sqrt{-C}, \sqrt{-C}]$;
	\item  $E_1 = \{\overline{W}_T < 1/\mu_0\}$ if $e_1 \neq 0$ and $C=0$;
	\item $E_1 = \{\overline{W}_T < -c/\sqrt{-C}\}$ if $e_1 \neq 0$ and $C<0$ and $\mu_0 \notin
[-\sqrt{-C}, \sqrt{-C}]$;
	\item $E_1 = \{ (c-\pi/2)/\sqrt{C}<\underline{W}_T  <\overline{W}_T \leq (c+\pi/2)/\sqrt{C}\}$ if $C>0$.
\end{itemize}
The event $E_2 := \{\tau > T \wedge S\}$ has the same representation with $W$ always replaced by $W^S$. It can easily be checked that $E_2 = \Omega$ if the polynomial corresponding to the stopped QNV process $X$ has a root greater than $x_0$.
These considerations illustrate that for any QNV model, quantities of the form $\E[\tilde{h}(Y_T)]$ or $\E[\tilde{h}(X_T)]$ can easily be computed by using only the joint distribution of Brownian motion together with its running minimum and maximum.

For later use, we state the following simple observation:
\begin{lemma}[Reciprocal of a solution $f$]  \label{L reciprocal}
	Fix $e_1, e_2, e_3, f_0 \in \R$ with $f_0 \neq 0$.
	For the solution $f: (a,b) \rightarrow \R$ of \eqref{E ODEf}, with $a,b \in [-\infty, \infty]$ denoting possible times of explosion of $f$, consider $\mu$ and $g$, as in \eqref{E fMu} and \eqref{E def g}. Define $\widehat{a} = \sup\{x \in (a,0] | f(x) = 0\} \vee a$ and  $\widehat{b} = \inf\{x \in [0,b] | f(x) = 0\} \wedge b$ and consider the functions  $\widehat{f}, \widehat{\mu}, \widehat{g}: (\widehat{a}, \widehat{b}) \rightarrow \R$, defined by $\widehat{f}(x) := 1/ f(x)$, $\widehat{\mu}(x) := \widehat{f}^{\prime \prime}(x) / (2 \widehat{f}^{\prime}(x)) $, and $\widehat{g}(x) := \exp\left(-\int_0^x \widehat{\mu}(z) \dd z\right)$.

Then $\widehat{f}$ solves the Riccati equation
    \begin{align}  \label{E hat f}   
        \widehat{f}^\prime(x) &= -e_3 \widehat{f}(x)^2 - e_2 \widehat{f}(x) - e_1, \text{ } \quad \widehat{f}(0) = f_0^{-1},
	\end{align}
	and $\widehat{g}$ satisfies $\widehat{g} = g f/f_0$. In particular, if $\widehat{a} > -\infty$ (resp., $\widehat{b} < \infty$),
then $\lim_{x \downarrow \widehat{a}} g(x) f(x)$  (resp., $\lim_{x \uparrow \widehat{b}} g(x) f(x)$) exists and is real.
\end{lemma}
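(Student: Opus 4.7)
The plan is to prove the three claims of the lemma in order, leveraging the structural ODE results from Lemmas~\ref{L ODE} and~\ref{L functionsODE}. The key observation unlocking everything is that $\widehat{f}$ satisfies a Riccati equation of exactly the same form as $f$, but with the coefficients $(e_1, e_2, e_3)$ replaced by $(-e_3, -e_2, -e_1)$; this lets me apply results already proved for $f$ verbatim to $\widehat{f}$.

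First, to establish \eqref{E hat f}, I would differentiate $\widehat{f} = 1/f$ and substitute $f^\prime = e_1 f^2 + e_2 f + e_3$, which collapses in one line to $\widehat{f}^\prime = -e_1 - e_2 \widehat{f} - e_3 \widehat{f}^2$. The initial condition $\widehat{f}(0) = 1/f_0$ is immediate since $f_0 \neq 0$.

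Second, to show $\widehat{g} = g f / f_0$, I would compare logarithmic derivatives. Applying \eqref{E fMu} to $\widehat{f}$ with its own Riccati coefficients yields $\widehat{\mu}(x) = -e_3 \widehat{f}(x) - e_2/2$. On the other hand, substituting $\mu(x) = e_1 f(x) + e_2/2$ together with $f^\prime(x)/f(x) = e_1 f(x) + e_2 + e_3/f(x)$ into $(\log(g f / f_0))^\prime = -\mu + f^\prime/f$ collapses everything to $e_2/2 + e_3/f(x) = -\widehat{\mu}(x)$, matching $(\log \widehat{g})^\prime$. Since both sides evaluate to $1$ at $x=0$, they coincide on $(\widehat{a}, \widehat{b})$.

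Finally, for the boundary claim I would split into two cases according to how $\widehat{a}$ arises. If $\widehat{a} > a$, then $f(\widehat{a}) = 0$ by definition, so $\widehat{f}$ blows up at $\widehat{a}$; the latter is thus a finite endpoint of the maximal interval of existence of $\widehat{f}$ as a Riccati solution, and Lemma~\ref{L functionsODE} applied to $\widehat{f}$ gives $\lim_{x \downarrow \widehat{a}} \widehat{g}(x) = 0$. If instead $\widehat{a} = a > -\infty$, then $f$ itself explodes at $a$ by Lemma~\ref{L functionsODE}, so $\widehat{f} \to 0$; consequently $\widehat{\mu}$ is bounded on a neighbourhood of $a$, the integral $\int_0^x \widehat{\mu}(z) \dd z$ converges as $x \downarrow a$, and $\widehat{g}(x)$ tends to a positive real number. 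The analogous dichotomy handles $\widehat{b}$. This case split is the only part requiring genuine care, since in the second case one cannot invoke the boundary statement of Lemma~\ref{L functionsODE} and must instead use the integrability of $\widehat{\mu}$ directly.
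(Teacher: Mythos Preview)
Your proposal is correct and follows essentially the same route as the paper. The paper dispatches \eqref{E hat f} via $\widehat{f}^\prime = -f^\prime/f^2$ and then uses the identity $\widehat{\mu} = \mu - f^\prime/f$ directly to obtain $\widehat{g} = \exp\bigl(\int_0^x f^\prime/f\bigr)\,g = gf/f_0$; your version computes $\widehat{\mu} = -e_3\widehat{f} - e_2/2$ from the Riccati coefficients and matches it against $(\log(gf/f_0))^\prime$, which is the same calculation read in the other direction. For the boundary statement the paper only writes ``follows as in Lemma~\ref{L functionsODE}'', whereas your case split (explosion of $\widehat{f}$ versus $\widehat{f}\to 0$) is more explicit and in fact covers a subcase the paper leaves implicit.
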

\begin{proof}
	Observe that $\widehat{f}^\prime(x) = - f^\prime(x)/f(x)^2$, which directly yields \eqref{E hat f}.
	Now, the identity $\widehat{\mu} = \mu - f^\prime/f$ implies that
	\begin{align*}
		\widehat{g}(x) &= \exp\left(-\int_0^x \widehat{\mu}(z) \dd z\right) = g(x) \exp\left(\int_0^x \frac{f^\prime(z)}{f(z)} \dd z\right) = \frac{g(x) f(x)}{f_0}.
	\end{align*}
	The existence of the limits follows as in Lemma~\ref{L functionsODE}.
\end{proof}

\section{Connection to geometric Brownian motion}  \label{S two real}
We now focus on the case when $Y$ is a QNV
process with a polynomial $P$ that has exactly two real roots $r_1, r_2$.
For that, we parameterize $P$ as $P(z) =
e_1 (z-r_1) (z -r_2)$ for some $e_1 \in \R\setminus\{0\}$
and $r_1, r_2 \in \R$ with $r_1 < r_2$. 

In the following, we shall connect the dynamics of a QNV process
to geometric Brownian motion. This link has been established for
the case $y_0 \in (r_1, r_2)$ in \citet{Rady}.
\begin{thm}[QNV process and geometric Brownian motion]  \label{P qnv GBM}
   Fix $T>0$, $e_1, r_1, r_2, y_0 \in \R$ with $e_1 \neq 0$ and $r_1<r_2$  and let $h:C([0,T],\R) \rightarrow [0,\infty]$ denote any nonnegative measurable function of continuous paths. 
   Let $Y = \{Y_t\}_{t \geq 0}$ denote a QNV process with polynomial $P(z) =
e_1 (z-r_1) (z-r_2)$  and $Y_0 = y_0$, and $Z= \{Z_t\}_{t \geq 0}$  a (possibly negative) geometric Brownian motion with $Z_0 = (y_0-r_2)/(y_0-r_1)$ and
    \begin{align}  \label{E dZstopped}
        \frac{\dd Z_t}{Z_t} = e_1 (r_2 - r_1) \dd B_t
    \end{align}
    for all $t \geq 0$, where $B = \{B_t\}_{t \geq 0}$
    denotes some Brownian motion. Let $\tau$ denote the first hitting time of $1$ by $Z$. Then we have that
    \begin{align}   \label{E X GBM}
\E[h(\{Y_t\}\tInd)] =
       \frac{y_0 - r_1}{r_2-r_1}
                \E\left[h\left(\left\{
                \frac{r_2-r_1 Z_t}{1 - Z_t}\right\}{\tInd}\right)  \1_{\{\tau > T\}}
                \left(1 - Z_T\right)\right].
    \end{align}
\end{thm}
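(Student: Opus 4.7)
My plan is to realize the geometric Brownian motion $Z$ as the image of $Y$ under the M\"obius transformation $\phi(y):=(y-r_2)/(y-r_1)$, composed with a Girsanov change of measure whose Radon--Nikodym density is linear in $Y$. Since $Y$ never hits $r_1$ by Feller's test, $\phi(Y)$ is well-defined; its inverse is $\phi^{-1}(z)=(r_2-r_1 z)/(1-z)$, and the algebraic identities $Y_t-r_1=(r_2-r_1)/(1-\phi(Y_t))$ and $Y_t-r_2=(r_2-r_1)\phi(Y_t)/(1-\phi(Y_t))$ will be used repeatedly.

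A direct It\^o computation under $\Q$ yields
\begin{align*}
d\phi(Y_t) = e_1(r_2-r_1)\,\phi(Y_t)\,dB_t - e_1^2(r_2-r_1)\,\phi(Y_t)(Y_t-r_2)\,dt,
\end{align*}
so the drift can be eliminated by the positive $\Q$-local martingale $M_t:=(Y_t-r_1)/(y_0-r_1)$, whose logarithmic dynamics are $dM_t/M_t=e_1(Y_t-r_2)\,dB_t$. Using $M$ as a Girsanov density produces a measure $\widetilde{\Prob}$ under which $\widetilde{B}_t:=B_t-\int_0^t e_1(Y_s-r_2)\,ds$ is a Brownian motion, and substitution shows that $d\phi(Y_t)=e_1(r_2-r_1)\,\phi(Y_t)\,d\widetilde{B}_t$ with $\phi(Y_0)=(y_0-r_2)/(y_0-r_1)=Z_0$. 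Hence under $\widetilde{\Prob}$ the process $\phi(Y)$ has the same law as $Z$, and its first hitting time of $1$ coincides with the explosion time of $Y$ (since $\phi(Y_t)\to 1$ iff $Y_t\to\infty$). The identity~\eqref{E X GBM} then follows from
\begin{align*}
\E[h(\{Y_t\}\tInd)]=\E^{\widetilde{\Prob}}\bigl[h(\{Y_t\}\tInd)\,M_T^{-1}\bigr]=\frac{y_0-r_1}{r_2-r_1}\,\E^{\widetilde{\Prob}}\bigl[h(\{Y_t\}\tInd)(1-\phi(Y_T))\bigr]
\end{align*}
upon substituting $M_T^{-1}=(y_0-r_1)(1-\phi(Y_T))/(r_2-r_1)$ and identifying $\phi(Y)$ under $\widetilde{\Prob}$ with $Z$.

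The main obstacle will be the strict local martingality: by Proposition~\ref{P qnv martingality}, $M$ is a true $\Q$-martingale only when $y_0\in[r_1,r_2]$, and otherwise $\widetilde{\Prob}$ is merely a sub-probability on $\CF_T$. I would handle this by localizing along an increasing sequence of stopping times $\sigma_n\uparrow\tau_{\infty}$, where $\tau_{\infty}$ is the explosion time of $Y$ under $\Q$; on $\CF_{T\wedge\sigma_n}$ the density $M^{\sigma_n}$ is a bounded $\Q$-martingale, so all the above manipulations are justified, and passing to the limit $n\to\infty$ produces the indicator $\1_{\{\tau>T\}}$ in~\eqref{E X GBM}. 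Paths of $Y$ that explode before $T$ are exactly those on which $h(\{Y_t\}\tInd)$ is undefined and on which $\phi(Y)=Z$ hits $1$ before $T$, and they are excluded from both sides of the identity in the limit.
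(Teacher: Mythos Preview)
Your approach is correct in spirit but runs in the \emph{opposite direction} from the paper's. The paper starts on the geometric Brownian motion side: it defines
\[
M_t := \frac{y_0-r_1}{r_2-r_1}\,(1-Z_t)\,\1_{\{\tau>t\}}, \qquad N_t := \frac{r_2-r_1 Z_t}{1-Z_t},
\]
observes that $M$ is a \emph{true} $\Q$-martingale (being an affine function of the GBM $Z$, stopped when it hits zero), changes to $\widetilde{\Q}$ via $d\widetilde{\Q}=M_T\,d\Q$, and then checks---via a quadratic-variation computation and a localized Girsanov argument---that $N$ is a $\widetilde{\Q}$-QNV process with polynomial $P$ and $N_0=y_0$. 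The identity then reads $\E^{\widetilde{\Q}}[h(\{N_t\})]=\E[h(\{Y_t\})]$ by weak uniqueness.

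You instead start from $Y$ and use $M_t=(Y_t-r_1)/(y_0-r_1)$ as density. The gain of the paper's direction is precisely that its density is \emph{always} a true martingale, so no localization or F\"ollmer-type extension is needed in any parameter regime. Your direction forces you to confront the strict local martingality of $Y-r_1$ when $y_0\notin[r_1,r_2]$, and while your proposed localization does work, your description of it contains a slip: there is no ``explosion time of $Y$ under $\Q$''---the QNV process $Y$ is globally defined and $\sigma_n\uparrow\infty$ $\Q$-a.s. The indicator $\1_{\{\tau>T\}}$ does not arise from excluding exploding $Y$-paths; it arises on the $Z$-side, because under $\widetilde{\Prob}_n$ the stopping time $\sigma_n$ corresponds to the first time the GBM $\phi(Y)$ hits a level $c_n\to 1$, so that $\{\sigma_n>T\}$ increases to $\{\tau>T\}$ in the $Z$-picture while it increases to the full space in the $Y$-picture. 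With that correction the limiting argument goes through by monotone convergence on both sides.
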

\begin{proof}
	Define two processes $M = \{M_t\}_{t \geq 0}$ and $N = \{N_t\}_{t \geq 0}$ by
		\begin{align*}
			M_t  :=\frac{y_0 - r_1}{r_2-r_1} \cdot (1 - Z_t) \1_{\{\tau > t\}}\quad \text{ and } \quad N_t := \frac{r_2-r_1 Z_t}{1 - Z_t} \cdot \1_{\{\tau > t\}}
		\end{align*}
	for all $t \geq 0$. Then $M$ is a nonnegative martingale, started in one, and thus defines a new probability measure $\widetilde{\Q}$ by  $\dd \widetilde{\Q} = M_T \dd \Q$. Moreover, the right-hand side of \eqref{E X GBM} can be written as
    \begin{align*}   
                \E\left[h\left(\left\{N_t\right\}{\tInd}\right)  \1_{\{\tau > T\}} M_T\right] = \E^{\widetilde{\Q}}\left[h\left(\left\{N_t\right\}{\tInd}\right) \right]
    \end{align*}
	since $\widetilde{\Q}(\tau > T) = 1$.  
 	Therefore, and because of  $N_0 = y_0$, it is sufficient to show that $N$  is a $\widetilde{\Q}$-QNV process with polynomial $P$.   Observe that, under $\widetilde{\Q}$,  It\^o's formula yields that
	\begin{align*}
		\dd \langle N \rangle_t = \left(\frac{r_2 - r_1}{(1-Z_t)^2}\right)^2 \dd \langle Z\rangle_t = e_1^2 \frac{(r_2 - r_1)^4 Z_t^2}{(1-Z_t)^4} \dd t = e_1^2 (N_t - r_1)^2 (N_t - r_2)^2 \dd t = P(N_t)^2 \dd t.
	\end{align*}
	Thus, provided that $N$ is a continuous $\widetilde{\Q}$-local martingale on $[0,T]$, the one-dimensional version of Theorem~3.4.2 in \citet{KS1} (in conjunction with the weak uniqueness of solutions to the underlying stochastic differential equation) yields that $N$ is a QNV process with polynomial $P$ under $\widetilde{\Q}$.

	It remains to be shown that $N$ is a $\widetilde{\Q}$-local martingale on  $[0,T]$. Towards this end, since $\widetilde{\Q}(\tau>T) = 1$, it is sufficient to show that $N^{\tau_i}$ is a $\widetilde{\Q}$-martingale for all $i \in \N$, where $\tau_i$ is defined as the first hitting time of $1/i$ by $M$.
	However, this follows from the observation that $N^{\tau_i} M^{\tau_i}$ is a $\Q$-martingale and from Girsanov's theorem; see also Exercise~VIII.1.20 in \citet{RY}.
\end{proof}

In the setup of Theorem~\ref{P qnv GBM}, we observe that the process $Z$ is negative if and only if  $y_0 \in (r_1, r_2)$, exactly the case treated by  \citet{Rady}. In that case,  $\tau = \infty$ $\Q$-almost surely as $Z$  never hits $1$. It is also exactly this case when $Y$ is 
a true martingale; compare Proposition~\ref{P qnv martingality}.  Indeed, using $h(\omega) = \omega_T$ in \eqref{E X GBM} shows that 
\begin{align*}
	\E[Y_T] &=        \frac{y_0 - r_1}{r_2-r_1}
                \E\left[(r_2-r_1 Z_T) \1_{\{\tau > T\}}\right]  = \frac{y_0 - r_1}{r_2-r_1}
                \left(\E\left[r_2-r_1 Z_T^\tau \right] - (r_2 - r_1) \Q(\tau \leq T)\right)\\&= y_0 - (y_0 - r_1) \Q(\tau\leq T),
\end{align*}
 which equals $y_0$ if and only if $y_0 \in (r_1, r_2)$.

\begin{remark}[Brownian motion and three-dimensional Bessel process]
	With the notation of Theorem~\ref{P qnv GBM}, set $r_1 = 0$ and $y_0 = e_1 = 1$. Furthermore, let $\widetilde{M} = \{\widetilde{M}_t\}_{t \geq 0}$ denote a Brownian motion starting in one and stopped in zero
 and let $\widetilde{Y} = \{\widetilde{Y}_t\}_{t \geq 0}$ denote the reciprocal of a
    three-dimensional Bessel process starting in one; that is, $\widetilde{Y}$ is a QNV process with polynomial $\widetilde{P}(z) = z^2$ and satisfies $\widetilde{Y}_0 = 1$.  

	 Formally, as $r_2 \downarrow 0$, the dynamics of the QNV process $Y$, described by the polynomial $P(z) = z(z-r_2)$, resemble more and more those of $\widetilde{Y}$.
	Similarly, consider the process $M = \{M_t\}_{t \geq 0}$, defined in the proof of Theorem~\ref{P qnv GBM} as $M_t = (1-Z^\tau_t)/r_2$ for all $t \geq 0$, and observe that 
	\begin{align*}
        M_t = \frac{1 - Z_t^\tau}{r_2} = \frac{1}{r_2} \left(1 - (1-r_2)\exp\left(r_2 B_t^\tau - \frac{r_2^2 t}{2}\right)\right) = 1- B_t^\tau + O(r_2)
    \end{align*}
	for all $t \geq 0$ by a Taylor series expansion.  Thus, as $r_2 \downarrow 0$, in distribution the martingale $M$ resembles, more and more, the Brownian motion $\widetilde{M}$.

	Observe, furthermore,  that the process $N = \{N_t\}_{t \geq 0}$, defined in the proof of Theorem~\ref{P qnv GBM}, satisfies $N_t = 1/M_t$ for all $t \geq 0$.  Thus, Theorem~\ref{P qnv GBM} states, with the given parameters, that $N = 1/M$ has the same distribution as $Y$ after changing the probability measure with the Radon--Nikodym derivative $M_T$.  Indeed, it is well known that the reciprocal $\widetilde{N} = 1/\widetilde{M}$ has the dynamics of $\widetilde{Y}$ after changing the probability measure with the Radon--Nikodym derivative $\widetilde{M}_T$; see also \citet{Perkowski_Ruf}.  Thus, Theorem~\ref{P qnv GBM} extends this well-known relationship of Brownian motion and Bessel process to processes that are, in some sense, approximately Brownian motion and Bessel process.
\qed
\end{remark}

\section{Closedness under changes of measure}   \label{S closedness}

Before studying specific changes of measure involving QNV processes, let us make some general observations. Towards this end, fix a nonnegative continuous local martingale $\widetilde{X} = \{\widetilde{X}_t\}_{t \geq 0}$ with $\widetilde{X}_0 = 1$, defined on a filtered probability space $({\Omega},  \bigvee_{t \geq 0} \CF_t, \{\CF_t\}_{t \geq 0}, {\Prob})$.
Let $\{\tau_n\}_{n \in \N}$ denote the first hitting times of levels $n \in \N$ by $\widetilde{X}$ and observe that $\widetilde{X}$ defines a sequence of consistent probability measures $\{\widetilde{\Prob}_n\}_{n \in \N}$ on $\{\CF_{\tau_n}\}_{n \in \N}$  via $\dd \widetilde{\Prob}_n = (\lim_{t \uparrow \infty} \widetilde{X}^{\tau_n}_t) \dd \Prob$.  Under sufficient technical assumptions, this sequence of probability measures can be extended to a measure $\widetilde{\Prob}$ on $\bigvee_{n \in \N} \CF_{\tau_n}$; in particular, the probability space must be large enough to allow for an event that has zero probability under $\Prob$ but positive probability under $\widetilde{\Prob}$; for details and further references we refer the reader to Subsection~2.2 in \citet{Ruf_Novikov}. 

We remark that we usually may assume that the necessary technical assumptions hold by embedding $\widetilde{X}$ in a ``sufficiently nice'' canonical space, as long as we are interested in distributional properties of functionals of the path of $X$; see also Remark~1 in \citet{Ruf_Novikov}. We shall call $\widetilde{\Prob}$ the F\"ollmer measure\footnote{\citet{F1972} realized the usefulness of relating probability measures to nonnegative supermartingales, in particular, to nonnegative local martingales. \citet{M}, giving credit to H.~F\"ollmer, simplified the construction of these measures, and \citet{DS_Bessel} used these results to develop an understanding of strict local martingales in the theory of no-arbitrage.} corresponding to $\widetilde{X}$.

One can prove that
    \begin{align} \label{E Qe}
        \E^{\widetilde{\Prob}}\left[\frac{1}{\widetilde{X}_T} \left(H \1_{\{\widetilde{X}_T < \infty\}}\right)\right] =\E^{\Prob}\left[H \1_{\{\widetilde{X}_T>0\}}\right]
    \end{align}
for all $\CF_T$-measurable random variables $H \in [0,\infty]$ and $T \geq 0$, where we again have set $\infty \cdot \1_A(\omega) = 0$ if $\omega \notin A$ for any $A \in \CF_T$.
Furthermore, $1/\widetilde{X}$ is a $\widetilde{\Prob}$-local martingale and, moreover, $\widetilde{X}$ is a strict $\Prob$-local martingale if and only if $\widetilde{\Prob}(\widetilde{X}_T = \infty) > 0$ for some $T > 0$.
It is important to note that $\Prob$ and $\widetilde{\Prob}$ are not equivalent if $\widetilde{X}$ is a strict $\Prob$-local martingale or has positive  probability (under $\Prob$) to hit zero.
See again \citet{Ruf_Novikov} for a proof  of these statements.

We now recall the stopped  $\Q$-QNV process $X$, defined as $X := Y^S$, where $S$ denotes the first hitting time of zero by the $\Q$-QNV process $Y$.  In particular, $X$ is a nonnegative $\Q$-local martingale.  
In order to allow for the not necessarily equivalent change to the F\"ollmer measure, we shall assume, throughout the following sections,  that $\Omega$ is the space of nonnegative continuous paths, taking values in $[0,\infty]$ and being absorbed when hitting either zero or infinity, and that $X$ is the canonical process on this space.
Moreover, we shall assume that $\mathcal{F} = \bigvee_{n \in \N} \CF_{\tau_n}$, where $\{\tau_n\}_{n \in \N}$ again denotes the first hitting times of levels $n \in \N$ by ${X}$, and that the
 F\"ollmer measure corresponding to $X$ exists; we shall denote it by  $\widehat{\Q}$. 

 The change of measure from $\Q$ to $\widehat{\Q}$ has the financial interpretation as the change of risk-neutral dynamics if the num\'eraire changes.  For example, if $X$ represents the price of Euros in Dollars and if $\Q$ is a risk-neutral measure for
prices denoted in Dollars, then  $\widehat{\Q}$ corresponds
to the measure under which asset prices denoted in Euros (instead of Dollars) follow local martingale dynamics.

Next, we study this change of measure for stopped QNV processes and observe that the class of stopped QNV processes is stable under changes of num\'eraires, a feature that makes QNV processes attractive as models for foreign
exchange rates.
We start with a simple observation that is related to the statement of Lemma~\ref{L reciprocal}:
\begin{lemma}[Roots of quadratic polynomial]  \label{L roots}
        We consider the polynomial $P(z) := e_1 z^2 + e_2 z + e_3$ of Section~\ref{SS qnv model} and
        its counterpart $\widehat{P}(z) :=-z^2 P(1/z) = -e_3 z^2 -e_2 z - e_1$. They satisfy the following duality relations:
        \begin{itemize}
            \item[(i)] $P$  has only complex roots if and only if $\widehat{P}$ has only complex ones;
            \item[(ii)]   $P$ has two non-zero roots if and only if $\widehat{P}$ has only two non-zero roots;
		\item[(iii)] $P$ has zero as a single root if and only if $\widehat{P}(z)$ is linear and non-constant,
                and vice versa;
            \item[(iv)]  $P$ has zero as a double root if and only if $\widehat{P}$ is constant,
                and vice versa.
        \end{itemize}
\end{lemma}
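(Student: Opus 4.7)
My approach would be based entirely on the algebraic identity
\begin{equation*}
    \widehat{P}(z) = -z^2 P(1/z) \qquad \text{for all } z \neq 0,
\end{equation*}
which is the very definition of $\widehat{P}$. From this identity I would extract two structural facts: first, that the map $r \mapsto 1/r$ is a bijection between the nonzero complex roots of $P$ and the nonzero complex roots of $\widehat{P}$, preserving multiplicities; and second, that $\widehat{\widehat{P}} = P$, so the roles of $P$ and $\widehat{P}$ are interchangeable and each equivalence requires only one direction.

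For parts (i) and (ii), I would first handle the non-degenerate case $e_1 \neq 0$, $e_3 \neq 0$, where both $P$ and $\widehat{P}$ are genuinely quadratic. Their discriminants both equal $e_2^2 - 4 e_1 e_3$, so $P$ has only non-real roots iff this discriminant is negative iff $\widehat{P}$ has only non-real roots, proving (i) in this case. The condition ``$P$ has two non-zero roots'' is simply ``$P$ is quadratic with $P(0) \neq 0$'', i.e., $e_1 \neq 0$ and $e_3 \neq 0$, which is manifestly symmetric under $(e_1,e_2,e_3) \mapsto (-e_3,-e_2,-e_1)$, proving (ii). In the remaining degenerate cases of (i) where $e_1 = 0$ or $e_3 = 0$, a short check shows that both $P$ and $\widehat{P}$ acquire a real root (typically $0$), so ``only complex roots'' fails on both sides, and the equivalence holds trivially.

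For (iii) and (iv) I would proceed by direct coefficient bookkeeping. Zero is a single root of $P$ exactly when $e_3 = 0$ and $e_2 \neq 0$; in this situation $\widehat{P}(z) = -e_2 z - e_1$ is linear (the quadratic coefficient vanishes) and non-constant (the linear coefficient $-e_2$ is nonzero), and the converse reads off the same conditions $e_3 = 0$, $e_2 \neq 0$. Similarly, zero is a double root of $P$ exactly when $P(z) = e_1 z^2$ with $e_1 \neq 0$, i.e., $e_2 = e_3 = 0$ and $e_1 \neq 0$; then $\widehat{P}(z) \equiv -e_1$ is a nonzero constant, and again the converse is immediate. The entire argument is elementary; the only ``obstacle'' is the bookkeeping of the several degenerate coefficient configurations, together with the tacit convention that $(e_1,e_2,e_3) \neq (0,0,0)$ so that $P$ is a nontrivial polynomial and all four statements are meaningful.
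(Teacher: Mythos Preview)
Your proposal is correct and follows essentially the same approach as the paper, which simply observes that ``if $r \in \R\setminus\{0\}$ is a root of $P$ then $1/r$ is a root of $\widehat{P}$'' and leaves the remaining ``simple considerations'' to the reader. Your write-up merely fleshes out those considerations (discriminants, coefficient bookkeeping, and the symmetry $\widehat{\widehat{P}} = P$), so there is nothing substantively different.
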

\begin{proof}
    The statement follows from simple considerations, such as that
    if $r \in \R\setminus\{0\}$ is a root of $P$,
    then $1/r$ is a root of $\widehat{P}$.
\end{proof}
In the context of the next proposition, we remind the reader of \eqref{E hat f}, which we shall utilize in the next section.
\begin{prop}[Closedness under a  change of num\'eraire]  \label{P closedness}
    The process $\widehat{X} := 1/X$ is a stopped QNV process  with polynomial $\widehat{P}(z) := -e_3 z^2 - e_2 z - e_1$ under the F\"ollmer measure $\widehat{\Q}$ corresponding to $X$ as underlying. In particular, the following hold:
    \begin{itemize}
        \item[(i)] $X$ under $\Q$ is a QNV process with complex roots if and only if $\widehat{X}$ is one under $\widehat{\Q}$;
        \item[(ii)] $X$ under $\Q$ is a QNV process with two real non-zero roots  if and only if $\widehat{X}$ is one under $\widehat{\Q}$;
        \item[(iii)] $X$ under $\Q$ is a QNV process with a single root at zero  if and only if $\widehat{X}$  under $\widehat{\Q}$ is a (possibly stopped) shifted geometric Brownian motion;\footnote{We call  a QNV process with linear, but not constant, polynomial ``shifted geometric Brownian motion.''}
        \item[(iv)] $X$ under $\Q$ is a QNV process with a double root at zero if and only if $\widehat{X}$  under $\widehat{\Q}$ is a  (constantly time-changed) stopped Brownian motion.
    \end{itemize}
\end{prop}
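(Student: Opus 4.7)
The plan is to leverage the setup from the preliminary discussion: on the canonical path space $\Omega$, the F\"ollmer construction produces a measure $\widehat{\Q}$ under which $\widehat{X} := 1/X$ is a continuous nonnegative local martingale. Identifying the polynomial then reduces to computing the quadratic variation of $\widehat{X}$, matching it against that of a QNV process with the candidate polynomial $\widehat{P}$, and invoking weak uniqueness for the QNV SDE (Problem~3.3.2 in \citet{McKean_1969}, cited in Section~\ref{SS qnv model}).

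The key computation runs as follows. Under $\Q$ we have $\dd X_t = P(X_t) \dd B_t$, hence $\dd \langle X \rangle_t = P(X_t)^2 \dd t$. Quadratic variation is a pathwise functional on the canonical space and is therefore unchanged under the passage to $\widehat{\Q}$, even though the two measures need not be equivalent. It\^o's formula applied to $z \mapsto 1/z$ yields
\begin{align*}
\dd \langle \widehat{X} \rangle_t = \frac{P(X_t)^2}{X_t^4} \dd t = \left(e_1 + e_2 \widehat{X}_t + e_3 \widehat{X}_t^2\right)^2 \dd t = \widehat{P}(\widehat{X}_t)^2 \dd t,
\end{align*}
using $\widehat{P}(z) = -z^2 P(1/z)$ and squaring to absorb the sign. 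Being a continuous $\widehat{\Q}$-local martingale with this quadratic variation, $\widehat{X}$ satisfies $\dd \widehat{X}_t = \widehat{P}(\widehat{X}_t) \dd \widehat{B}_t$ for some $\widehat{\Q}$-Brownian motion $\widehat{B}$ (constructed by a Dubins--Schwarz argument on a possibly extended space, with the sign of $\widehat{B}$ chosen freely). Weak uniqueness of the QNV SDE then identifies $\widehat{X}$ as a QNV process with polynomial $\widehat{P}$. The ``stopped'' qualifier is inherited automatically: the first hitting time of $0$ by $\widehat{X}$ coincides with the first hitting time of $\infty$ by $X$, at which point both processes are absorbed by construction of $\Omega$.

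Parts (i)--(iv) then follow by applying Lemma~\ref{L roots} to the pair $(P, \widehat{P})$. For (iii), a linear non-constant $\widehat{P}(z) = -e_2 z - e_1$ produces $\dd \widehat{X}_t = (-e_2 \widehat{X}_t - e_1) \dd \widehat{B}_t$, a shifted geometric Brownian motion in the paper's sense; for (iv), a constant $\widehat{P}(z) = -e_1$ produces $\dd \widehat{X}_t = -e_1 \dd \widehat{B}_t$, a constantly time-changed Brownian motion.

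The main obstacle, and the reason Section~\ref{S closedness} invests effort in the F\"ollmer construction, is that $\widehat{\Q}$ need not be equivalent to $\Q$: $\widehat{\Q}$ assigns positive mass to $\{X_T = \infty\}$ precisely when $X$ is a strict $\Q$-local martingale, and analogously with the roles of $0$ and $\infty$ swapped. Girsanov's theorem is therefore not directly available, and the argument rests on the pathwise interpretation of quadratic variation on the canonical path space together with the local martingale property of $1/X$ under $\widehat{\Q}$ supplied by the F\"ollmer construction. Once these are in place, the SDE identification proceeds in the same spirit as in the proof of Theorem~\ref{P qnv GBM}.
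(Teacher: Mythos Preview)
Your argument is correct but takes a different route from the paper. The paper exploits the very localization structure built into the F\"ollmer construction: on each $\CF_{\tau_n}$ (with $\tau_n$ the first hitting time of level $n$ by $X$) the measure $\widehat{\Q}$ is absolutely continuous with respect to $\Q$, so Lenglart's extension of Girsanov applies directly and shows that $\widehat{X}^{\tau_n}$ solves the QNV SDE with polynomial $\widehat{P}$; the pieces are then patched together via $\bigvee_n \CF_{\tau_n} = \CF$. You instead bypass Girsanov, carrying the quadratic-variation identity $\dd\langle\widehat{X}\rangle_t=\widehat{P}(\widehat{X}_t)^2\,\dd t$ across to $\widehat{\Q}$ and then invoking a martingale representation (Dubins--Schwarz, as in the proof of Theorem~\ref{P qnv GBM}) to produce the driving Brownian motion. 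Your claim that quadratic variation is ``pathwise'' is the one step that deserves a citation or a remark: it is justified either by F\"ollmer's pathwise It\^o calculus or, more simply, by the same localization the paper uses (since $\widehat{\Q}\ll\Q$ on $\CF_{\tau_n}$, the $\Q$-a.s.\ identity $\langle\widehat{X}\rangle_{t\wedge\tau_n}=\int_0^{t\wedge\tau_n}\widehat{P}(\widehat{X}_s)^2\,\dd s$ transfers). The paper's approach is slightly more self-contained in that it avoids any appeal to pathwise quadratic variation or to enlarging the probability space; your approach has the virtue of mirroring the argument already used for Theorem~\ref{P qnv GBM} and of making the role of the polynomial identity $\widehat{P}(z)=-z^2P(1/z)$ completely explicit.
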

\begin{proof}
    The reciprocal $\widehat{X}$ of $X$ is a $\widehat{\Q}$-local martingale by the discussion at the beginning of this section. Let $\{\tau_n\}_{n \in \N}$ again denote  the first hitting times of levels $n \in \N$ by $X$. Then we observe that $\widehat{\Q}$ is absolutely continuous with respect to $\Q$ on $\CF_{\tau_n}$ for all $n \in \N$.  Thus, by Lenglart's extension of Girsanov's theorem \citep[see also Theorem~VIII.1.4 in][]{RY}, $\widehat{X}^{\tau_n}$ is (up to stopping) a QNV process with polynomial $\widehat{P}$ for all $n \in \N$.  Since we assumed that $\bigvee_{n \in \N} \CF_{\tau_n} = \CF$ and, therefore, $\widehat{\Q}$ is uniquely determined through the $\pi$-system $\cup_{n \in \N} \CF_{\tau_n}$, we can conclude that $\widehat{X}$ is a stopped $\widehat{\Q}$-QNV process with polynomial $\widehat{P}$.  
    The statements in (i) to (iv) follow
    from Lemma~\ref{L roots}.
\end{proof}

We now are ready to give a simple proof of Proposition~\ref{P qnv martingality}, stated in the introduction:\\
\emph{Proof of Proposition~\ref{P qnv martingality}.}  
	Consider the probability measure $\widehat{\Q}$ and the stopped $\widehat{\Q}$-QNV process $\widehat{X}$ introduced in Proposition~\ref{P closedness}.  
	By the discussion at the beginning of this section, strict local martingality of $X$ is equivalent to $\widehat{\Q}(\widehat{X}_T =0) > 0$  for some $T>0$.  We shall use the fact, discussed in Section~\ref{SS qnv model}, that $\widehat{X}$ does not hit any  roots of $\widehat{P}$. Thus, $\widehat{\Q}(\widehat{X}_T =0) > 0$ for some $T>0$ is equivalent to 
the polynomial $\widehat{P}$ not having any nonnegative real roots less than or equal to $1/x_0$. This again is equivalent to $e_1 \neq 0$ together with the condition that $P$ has no roots greater than or equal to $x_0$; this is due to the fact that 
$e_1 = 0$ implies that $0$ is a root of $\widehat{P}$ and that $r \in (0, \infty)$ is a root of $P$ if and only if $1/r$ is a root of $\widehat{P}$; see Lemma~\ref{L roots}.  Thus, we have proven the statement concerning the martingality of $X$.

	If $e_1 = 0$, then $Y$ is either constant or Brownian motion (if $e_2 = 0$) or $\widetilde{Y} := Y + e_3/e_2$ is geometric Brownian motion  (if $e_2 \neq 0$) . In all these cases, $Y$ is a true martingale. If $y_0$ lies between two roots
	of $P$, then $Y$ is bounded and thus a martingale. For the reverse direction,  assume that $Y$ is a martingale and that $e_1 \neq 0$. Then there exists a root $r\geq y_0$ of $P$ since otherwise $X = Y^S$ is a strict local martingale. Denote the second root of $P$ by $\widetilde{r}$ and define the QNV process $\widetilde{Y} := r - Y$ with polynomial $P^{\widetilde{Y}} (z)= -e_1 z(z - (r - \widetilde{r}))$. It is clear that  $\widetilde{Y}$ is again a martingale and thus, by the same argument, $r - \widetilde{r} \geq \widetilde{Y}_0 = r-y_0$, which yields the statement.
\qed

Alternatively, we could have used the criterion in \citet{Kotani_2006} to prove Proposition~\ref{P qnv martingality}.

\section{Semistatic hedging}  \label{S semistatic}
In the following, we present an interesting symmetry that can be applied for the semistatic replication of barrier options in certain parameter setups.
\begin{prop}[Symmetry]  \label{P complexsymmetry}
    We fix $T>0$ and assume that $X$ is a stopped QNV process with a polynomial of the form
    $P(z) = e z^2/L + e_2 z+ eL$ and $x_0 = L$ for some $L > 0$ and $e,e_2 \in \R$.
    Let $h:[0,\infty] \rightarrow [0,\infty]$ denote some
    measurable nonnegative function satisfying $h(0) =0$ and $h(\infty) \in \R$.
    We then have the equivalence
    \begin{align*}
        h\left(\frac{X_T}{L}\right) \in \mathcal{L}^1(\Q) \text{  } \Leftrightarrow \text{  }
            h\left(\frac{L}{X_T}\right) \frac{X_T}{L}\in \mathcal{L}^1(\Q),
    \end{align*}
	where $\mathcal{L}(\Q)$ denotes the space of integrable random variables with respect to $\Q$, and the identity
    \begin{align}  \label{E_zr_sym}
        \E\left[h\left(\frac{X_T}{L}\right)\right] = \E\left[h\left(\frac{L}{X_T}\right) \frac{X_T}{L}\right].
    \end{align}
    In particular, by using $h(x) = \1_{x>0} \1_{x < \infty}$, we obtain that
    \begin{align*}
        \E[X_T] = L \Q(X_T > 0),
    \end{align*}
    and by replacing $h(x)$ by $h(x) \1_{x>L}$,
    \begin{align}  \label{E_zr_sym_ind}
        \E\left[h\left(\frac{X_T}{L}\right) \1_{\{X_T > L\}}\right] = \E\left[h\left(\frac{L}{X_T}\right) \frac{X_T}{L} \1_{\left\{X_T < L\right\}}\right].
    \end{align}
\end{prop}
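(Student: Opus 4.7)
The plan is to reduce the symmetry~\eqref{E_zr_sym} to an application of the F\"ollmer measure machinery introduced at the start of Section~\ref{S closedness}, combined with a distributional self-duality of $X$ that arises from an algebraic symmetry of the polynomial $P$. The two independent ingredients are: (a) identify the law of the process $L^2/X$ under the F\"ollmer measure associated with $X$; and (b) apply the F\"ollmer change-of-measure identity~\eqref{E Qe} to rewrite the right-hand side of~\eqref{E_zr_sym} as an expectation under that measure.

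The key algebraic observation is that the prescribed form $P(z)=ez^2/L+e_2 z+eL$ satisfies the functional identity $(z^2/L^2)\,P(L^2/z)=P(z)$ for $z\in(0,\infty)$; that is, $P$ is invariant under the involution $z\mapsto L^2/z$ up to the Jacobian factor $z^2/L^2$. Let $\widehat{\Q}$ denote the F\"ollmer measure corresponding to $X$ and set $V_t:=L^2/X_t$. By Proposition~\ref{P closedness}, the process $1/X$ is a stopped $\widehat{\Q}$-QNV process with polynomial $\widehat{P}(z)=-eLz^2-e_2 z-e/L$; It\^o's formula combined with the functional identity above then yields $\dd V_t=-P(V_t)\,\dd \widehat{B}_t$ for a $\widehat{\Q}$-Brownian motion $\widehat{B}$, with $V_0=L$, where $V$ is stopped at its first hitting time of zero (which corresponds to $X$ hitting $\infty$). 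Since $-\widehat{B}$ is again a $\widehat{\Q}$-Brownian motion, weak uniqueness of the SDE~\eqref{E qnv X} implies that the law of $V$ under $\widehat{\Q}$ equals the law of $X$ under $\Q$.

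I would then apply~\eqref{E Qe} with $\widetilde{X}=X$, which reads
\[
\E^{\Q}\!\left[H\,\1_{\{X_T>0\}}\right]=L\cdot\E^{\widehat{\Q}}\!\left[(H/X_T)\,\1_{\{X_T<\infty\}}\right]
\]
for any $\CF_T$-measurable $H\geq 0$. Choosing $H:=h(L/X_T)\cdot X_T/L$, the assumption $h(\infty)\in\R$ lets the factor $X_T/L$ absorb $\{X_T=0\}$ on the left (since $h(\infty)\cdot 0=0$), and after the cancellation $H/X_T=h(L/X_T)/L$ the assumption $h(0)=0$ disposes of $\{X_T=\infty\}$ on the right; both indicators can therefore be dropped, yielding
\[
\E^{\Q}\!\left[h(L/X_T)\,X_T/L\right]=\E^{\widehat{\Q}}\!\left[h(L/X_T)\right]=\E^{\widehat{\Q}}\!\left[h(V_T/L)\right]=\E^{\Q}\!\left[h(X_T/L)\right],
\]
where the last equality uses the distributional identification from the previous paragraph. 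The integrability equivalence is obtained by running the same chain of equalities for the truncations $h\wedge n$ and invoking monotone convergence on both sides; the two specializations in the statement are then immediate.

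The main obstacle I anticipate is precisely the boundary bookkeeping above: because $\widehat{\Q}$ is in general singular with respect to $\Q$ on $\{X_T=0\}$ (which may carry positive $\Q$-mass when $X$ has positive probability of hitting zero) and on $\{X_T=\infty\}$ (which may carry positive $\widehat{\Q}$-mass when $X$ is a strict $\Q$-local martingale), the indicators in the F\"ollmer identity cannot be ignored without justification. The hypotheses $h(0)=0$ and $h(\infty)\in\R$ in the statement are tailored for exactly this cancellation, and the convention $\infty\cdot\1_A(\omega)=0$ outside $A$ adopted just before Corollary~\ref{C computation} keeps the boundary expressions unambiguous.
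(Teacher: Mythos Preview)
Your proof is correct and follows essentially the same route as the paper's: F\"ollmer measure change via~\eqref{E Qe} combined with the self-duality coming from Proposition~\ref{P closedness}. The only cosmetic difference is that the paper first rescales to $L=1$ via $Z_t:=X_t/L$ (so that $\widehat P=-P$ directly), whereas you keep $L$ general and encode the same fact through the functional identity $(z^2/L^2)\,P(L^2/z)=P(z)$; the boundary bookkeeping you describe matches the paper's use of $h(0)=0$ and $h(\infty)\in\R$.
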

\begin{proof}
    We observe that $Z = \{Z_t\}_{t \geq 0}$, defined by $Z_t := X_t/L$ for all $t \geq 0$, is a stopped QNV process with a polynomial of the form
    $P(z) = e z^2 + e_2 z + e$ and satisfies $Z_0 = 1$. Thus, we can assume, without loss of generality, that $L=1$.
    Now, \eqref{E Qe}, with $H = h(X_T)$, yields, for the F\"ollmer measure $\widehat{\Q}$ corresponding to $X$, that
    \begin{align*}
        \E\left[h\left(X_T\right)\right] =
        \E^{\widehat{\Q}}\left[\frac{h\left(X_T\right)}{X_T}\right]=
        \E\left[h\left(\frac{1}{X_T}\right)X_T\right],
    \end{align*}
    where the second equality follows from observing that $1/X$
    has the same distribution under $\widehat{\Q}$ as $X$ has under $\Q$; see Proposition~\ref{P closedness}.
    This shows \eqref{E_zr_sym}, and the other parts of the
    statement follow  from it directly.
\end{proof}

\begin{remark}[Alternative proof of Proposition~\ref{P complexsymmetry}]
    Proposition~\ref{P complexsymmetry} can also directly be shown without relying
    on the F\"ollmer measure.  For this, we again assume $L=1$ and define the sequences of processes
    $X^n$ ($X^{1/n}$) by stopping $X$ as soon as it hits $n$ ($1/n$) for all $n \in
    \N$. We then observe that Girsanov's theorem  \citep[Theorem~VIII.1.4 of][]{RY} implies for all $\epsilon > 0$ and
    all Borel sets $A_\epsilon
    \subset (\epsilon, \infty)$
    \begin{align*}
        \E\left[X^n_T \1_{\{1/X^n_T \in A_\epsilon\}}\right] =
            \Q\left(X_{T}^{1/n} \in A_\epsilon\right).
    \end{align*}
    Now, we first let $n$ go to $\infty$ and then $\epsilon$ to
    zero and obtain that
    \begin{align*}
    \E\left[X_T \1_{\{1/X_T \in A\}}\right] =
            \Q\left(X_{T}\in A \cap (0,\infty)\right)
    \end{align*}
    for all Borel sets $A$, which again yields  \eqref{E_zr_sym}. \qed
\end{remark}


\begin{remark}[Semistatic hedging]
    Proposition~\ref{P complexsymmetry} and, in particular,
    \eqref{E_zr_sym_ind} can be interpreted as the existence of a semistatic hedging
    strategy for barrier options in the spirit of
    \citet{Bowie_Carr}, \citet{Carr_Ellis_Gupta}, and \citet{Carr_Lee_2009}.

    To see this, consider
    a QNV process $X$ with a polynomial of the form
    $P(z) = ez^2/L + e_2 z+ e L$ and $x_0 > L$ for some $L > 0$ and $e,e_2 \in \R$. Consider further
    a down-and-in barrier option with barrier $L$ and terminal payoff $h(X_T/L)$ if
    the barrier is hit by $X$. For a semistatic hedge, at time zero one buys
    two positions of European claims, the first paying off
    $h\left({X_T}/{L}\right) \1_{\{X_T \leq L\}}$ and the second paying off
    $h\left({L}/{X_T}\right) {X_T}/{L} \1_{\left\{X_T < L\right\}}$.  If the barrier
    is not hit, both positions have zero price at time $T$. If the barrier is
    hit, however, one sells the second position and buys instead a third position paying
    off $h\left({X_T}/{L}\right) \1_{\{X_T > L\}}$. The equality in \eqref{E_zr_sym_ind}
    guarantees that these two positions have the same price at the hitting time of the barrier.
    This strategy is semistatic
    as it requires trading only at maximally two points of time.

    Proposition~\ref{P complexsymmetry}, in particular, contains the well-known case of
    geometric Brownian motion ($e=0$), where semistatic hedging is always possible.
    It is an open question to determine more general symmetries
    than that of Proposition~\ref{P complexsymmetry}. One difficulty  arises here
    from the lack of equivalence of the two measures $\Q$ and $\widehat{\Q}$. However, adding an independent
    change of time to the dynamics of $X$ preserves any existing such symmetry.
\qed
\end{remark}

\section{Joint replication and hyperinflation}     \label{S hyperinflation}
In this section, we continue with a financial point of view and interpret the probability measure $\Q$ as the unique risk-neutral measure, under which the stopped QNV process $X$ denotes the price of an asset,
say, the price of a Euro in Dollars. The F\"ollmer measure $\widehat{\Q}$, introduced in Section~\ref{S closedness},
can then be interpreted as the unique risk-neutral probability measure of a European investor who uses the price of a Euro as a num\'eraire. To emphasize this point further, we shall use  the notation $\Qd := \Q$ and $\Qe := \widehat{\Q}$ from now on.

Throughout this section, we assume a finite time horizon $T<\infty$. In the spirit of \citet{CFR2011}, we describe a contingent claim by a pair $D = (D^\$, D^\eu)$ of random variables such that $D^\$ = H^\$(\{X_t\}_{t \in [0,T]})$
and $D^\eu = H^\eu(\{X_t\}_{t \in [0,T]})$ for two measurable functions $H^\$, H^\eu: C([0,T],[0,\infty]) \rightarrow [0,\infty]$ satisfying $D^\eu = D^\$ / X_T$ on the event $\{0 < X_T < \infty\}$. The first component of $D$ represents the claim's (random) payoff denoted in Dollars at time $T$ as seen under the measure $\Q^\$$  and the second its (random) payoff denoted in Euros at time $T$ as seen under the measure $\Q^\eu$. In particular, the condition  $D^\eu = D^\$ / X_T$ on the event $\{0 < X_T < \infty\}$ ensures that both the American and the European investors receive the same payoff (in different currencies) in the states of the world that both measures $\Qd$ and $\Qe$ can ``see.''  

We are interested in the quantity     
\begin{align}  \label{E ppdd}
        p^{\$}(D) &= \E^{\Qd}\left[D^{\$}\right] + x_0 \E^{\Qe}\left[D^{\eu} \1_{\{1/X_T = 0\}}\right], 
\end{align}
derived in \citet{CFR2011}.  This quantity describes a possible price (in Dollars), the \emph{minimal joint replicating price}, for the claim $D$.  Under additional assumptions on the completeness of the underlying market, this price can be interpreted as the minimal cost of superreplicating the claim's payoffs almost surely under both measures $\Qd$ and $\Qe$; observe that these two probability measures are not equivalent if $X$ is a strict local martingale.

\begin{cor}[Minimal joint replicating price in a QNV model]  \label{C minimal super}
    We have the identity
    \begin{align}  
        p^\$(D) =&\,\, \E\left[H^{\$}\left(\{f(W_t^S)\}\tInd\right)  \1_{\{\tau > S \wedge T\}}
            \exp\left(\frac{C(T \wedge S)}{2}\right) g^\$(W_T^{S})\right]   \label{E p qnv}\\ &+
                x_0 \E\left[H^{\eu}\left(\{f(W_t^\tau)\}\tInd\right) \1_{\{\tau \leq S \wedge T\}} \exp\left(\frac{C(T\wedge \tau)}{2}\right)
                 g^{\eu}(W_T^{\tau})\right],  \nonumber
    \end{align}
    where $W = \{W_t\}_{t\geq 0}$ denotes a Brownian motion and $S$ ($\tau$) the first hitting time of zero (infinity) by $\{f(W_t)\}_{t \geq 0}$ and $C, f$, and $g^\$ \equiv g$ are as in Corollary~\ref{C computation}; moreover, we use  $g^\eu$ similarly but corresponding to the stopped QNV process $\widehat{X}$ of Proposition~\ref{P closedness}.
\end{cor}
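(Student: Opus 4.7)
The identity $p^\$(D) = \E^{\Qd}[D^\$] + x_0 \E^{\Qe}[D^\eu \1_{\{1/X_T = 0\}}]$ has two summands, which I treat separately, invoking Corollary~\ref{C computation} in each case. For the first summand, since $D^\$ = H^\$(\{X_t\}\tInd)$ and $X$ is the stopped $\Qd$-QNV process from Section~\ref{SS qnv model}, applying Corollary~\ref{C computation} directly to $X$ gives the representation in terms of $f$, $g^\$$, $\tau$ and $S$. On the event $\{\tau > T \wedge S\}$ one has $\tau > t \wedge S$ for every $t \in [0,T]$, so $W^{\tau \wedge S}$ agrees with $W^S$ on $[0,T]$, reproducing the first summand of the statement.

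For the second summand I rewrite the integrand as $H^\eu(\{1/\widehat{X}_t\}\tInd)\,\1_{\{\widehat{X}_T = 0\}}$, where $\widehat{X} := 1/X$ is, by Proposition~\ref{P closedness}, a stopped QNV process under $\Qe$ with polynomial $\widehat{P}(z) = -e_3 z^2 - e_2 z - e_1$. A direct computation shows $\widehat{C} = (-e_3)(-e_1) - (-e_2)^2/4 = C$, so the exponential constants in the two pictures coincide. Applying Corollary~\ref{C computation} to $\widehat{X}$ and, by Lemma~\ref{L reciprocal}, choosing the representation that uses the same Brownian motion $W$ and the reciprocal function $\widehat{f} = 1/f$, the integrand becomes $H^\eu(\{1/\widehat{f}(W_t^{\widetilde{\tau} \wedge \widetilde{S}})\}\tInd) = H^\eu(\{f(W_t^{\widetilde{\tau} \wedge \widetilde{S}})\}\tInd)$, and the accompanying density factor is $g^\eu(W_T^{\widetilde{\tau} \wedge \widetilde{S}})$ by definition of $g^\eu$; here $\widetilde{\tau}$ and $\widetilde{S}$ denote the boundary-exit and zero-hitting times produced by Corollary~\ref{C computation} for $\widehat{X}$.

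It remains to match the stopping times and indicators. The constraint $\{\widehat{X}_T = 0\}$ forces $W$ to reach a point where $\widehat{f}$ vanishes; in the $f$-picture this is exactly the original boundary $\{a,b\}$ of the domain of $f$ (where $f = \infty$), and $W$ must reach it before time $T$ and before hitting any root of $f$ (otherwise $\widehat{f}$ would blow up to infinity rather than vanish). This is precisely the event $\{\tau \leq S \wedge T\}$. On this event $\widetilde{S} = \tau$ and $\widetilde{\tau} \geq \tau$, so $W_t^{\widetilde{\tau} \wedge \widetilde{S}} = W_t^\tau$ throughout $[0,T]$, $T \wedge \widetilde{S} = T \wedge \tau$, and the indicator $\{\widetilde{\tau} > T \wedge \widetilde{S}\}$ from Corollary~\ref{C computation} becomes automatic. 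Multiplying by the $x_0$ prefactor then yields the second summand.

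The main obstacle is the careful bookkeeping when passing between the $f$- and $\widehat{f}$-pictures: the maximal domains of these two functions generally differ, so $\widetilde{\tau}$ and $\widetilde{S}$ associated to $\widehat{X}$ are not a priori the same as $\tau$ and $S$ associated to $X$. Lemma~\ref{L reciprocal}, together with the identification of which boundary points of the domain of $\widehat{f}$ correspond to $\widehat{f} = 0$ versus $\widehat{f} = \infty$, is what reduces the indicator $\{\widehat{X}_T = 0\}$ to $\{\tau \leq S \wedge T\}$ and collapses $\widetilde{\tau} \wedge \widetilde{S}$ to $\tau$ on that event.
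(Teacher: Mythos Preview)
Your proof is correct and follows essentially the same route as the paper: apply Corollary~\ref{C computation} to each summand of \eqref{E ppdd}, invoke Proposition~\ref{P closedness} to recognise $\widehat{X}=1/X$ as a stopped $\Qe$-QNV process, and use Lemma~\ref{L reciprocal} to identify $\widehat{f}=1/f$ and $\widehat{C}=C$, so that the stopping times for $\widehat{X}$ are obtained from those for $X$ by swapping the roles of $S$ and $\tau$. The paper compresses this swap into the phrase ``$S$ and $\tau$ interchange places'' and then verifies the indicator identity $\{\widehat{f}(W^\tau_T)=0\}\cap\{S>\tau\wedge T\}=\{\tau\leq S\wedge T\}$ directly; you instead track the auxiliary stopping times $\widetilde{\tau},\widetilde{S}$ explicitly and argue event-wise, which is a bit more detailed but amounts to the same verification.
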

\begin{proof}
	By Corollary~\ref{C computation}, the first term on the right-hand side of \eqref{E p qnv} corresponds to $\E^{\Qd}\left[D^{\$}\right] $. Now, Proposition~\ref{P closedness} yields that $\widehat{X} = 1/X$ is a stopped 
	$\Qe$-QNV process for some polynomial $\widehat{P}$.  With the representation $\widehat{f}$ of $\widehat{X}$ (along with $\widehat{C}$, among others) in Theorem~\ref{T QNV} and Corollary~\ref{C computation},  Lemma~\ref{L reciprocal} yields that $\widehat{f} = 1/f$, that $\widehat{C} = C$, and that $S$ and $\tau$ interchange places.  Since $\{\widehat{f}(W^\tau_T) = 0\} \cap \{S>\tau \wedge T\} = \{\tau \leq T\}\cap \{\tau \wedge T<S\} =\{\tau \leq S \wedge T\}$, we conclude by applying Corollary~\ref{C computation} to the second term on the right-hand side of \eqref{E ppdd}.
\end{proof}

We emphasize the symmetry of $\tau$ and $S$, which we relied on in the proof of the corollary. The stopping time $\tau$ is the first time $f(W)$ hits infinity and $\widehat{f}(W) = 1/f(W)$ hits zero and  the stopping time
$S$ satisfies the converse statement. 

We also remark that the probability measure $\Prob$ of Theorem~\ref{T QNV}  can be interpreted as a physical measure, under which hyperinflations occur with positive probability. Thus, $f(W^S)$ can be used to model an exchange rate that allows (under $\Prob$)
for hyperinflations in either Euros or Dollars; then $p^\$$ represents the minimal replicating cost (in Dollars) for a claim that pays $D^\$$ Dollars if no hyperinflation of the Dollar occurs and that pays $D^\eu$ Euros if the Dollar hyperinflates, 
corresponding to the Dollar price of a Euro being infinity. For a more thorough discussion on this interpretation, we refer the reader to \citet{CFR2011}.

	Lemma~\ref{L reciprocal} shows that $g^{\eu}(W_T^\tau) = g^{\$}(W_T^\tau) f(W_T^\tau)/x_0$ on $\{\tau \leq S\}$, with multiplications of zero and infinity formally interpreted in such a way as to obtain equality.  Moreover,
after setting  $D^\eu = \infty$ on $\{S \leq \tau \wedge T\}$ and $D^\$= \infty$ on $\{\tau \leq S \wedge T\}$, we also have that $D^{\eu} = D^\$/f(W_T^{\tau \wedge S})$, again with undefined expressions interpreted in such a way to obtain equality. Therefore,
the expression in \eqref{E p qnv} formally reduces to
    \begin{align}  \label{E p qnv_symbolic}
        p^\$(D) = \text{``}\E^\Prob\left[H^{\$}\left(\{f(W_t^{\tau \wedge S})\}\tInd\right)  \exp\left(\frac{C(T \wedge S \wedge \tau)}{2}\right)  g^\$(W_T^{\tau \wedge S})\right],\text{''}
    \end{align}
    where  all multiplications of $0$ with $\infty$ are interpreted in the sense of \eqref{E p qnv}. Indeed, if all multiplications are well defined, \eqref{E p qnv_symbolic} exactly corresponds to \eqref{E p qnv}.

As a brief illustration of the last corollary, consider the minimal joint replicating price of one Euro in Dollars,
to wit, $D=(X_T,1)$. From \eqref{E p qnv}  we obtain that
    \begin{align*} 
        p^\$(D) &= \E\left[\exp\left(\frac{C(T \wedge S \wedge \tau)}{2}\right) \left( \left(f(W^{\tau \wedge S}_T) \1_{\{\tau > S \wedge T\}}\right)
             g^\$(W_T^{\tau \wedge S})  +
                x_0 \1_{\{\tau \leq S \wedge T\}} 
                 g^{\eu}(W_T^{\tau \wedge S})\right)\right]\\
		&= x_0 \E\left[\exp\left(\frac{C(T \wedge S \wedge \tau)}{2}\right) \left(\1_{\{\tau > S \wedge T\}}
             g^\eu(W_T^{\tau \wedge S})  +\1_{\{\tau \leq S \wedge T\}} 
                 g^{\eu}(W_T^{\tau \wedge S})\right)\right]\\
		&= x_0 \E\left[\exp\left(\frac{C(T \wedge S \wedge \tau)}{2}\right)
             g^\eu(W_T^{\tau \wedge S}) \right] = x_0,
    \end{align*}
where we have used the representation $g^\eu = f g^\$/x_0$ of Lemma~\ref{L reciprocal}.  Thus, the minimal replicating cost for one Euro is $x_0$ Dollars, exactly what we hoped for. We remark that the symbolic representation of \eqref{E p qnv_symbolic} directly yields the same statement, too.

\appendix
\section{A technical result}  \label{A technical}
The following lemma is used in the proof of Theorem~\ref{T QNV}:
\begin{lemma}[Necessary condition for path independence of
integrals] \label{L pathindepence}
    Let $W = \{W_t\}_{t \geq 0}$ denote a Brownian motion and 
    $\tau$  a stopping time of the form \eqref{E t1 tau}
    for some $a,b \in [-\infty, \infty]$ with $a <0 < b$.
    Let $h:(a,b) \rightarrow \R$ denote a continuous function and assume that
    \begin{align}  \label{E hg}
        \int_0^{t} h(W_s) \dd s = \widetilde{h}(t,W_t)
    \end{align}
    almost surely on $\{\tau > t\}$ for all $t \geq 0$, for some measurable function $\widetilde{h}: [0, \infty) \times (a,b) \rightarrow \R$.
    Then $h(\cdot) \equiv C$ for
    some $C\in \R$.
\end{lemma}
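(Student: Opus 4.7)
The plan is to regularize $\tilde{h}$ and then apply It\^o's formula.  Note that \eqref{E hg} does not uniquely determine $\tilde{h}$ off the (random) graph of $\{(t, W_t)\}_{t<\tau}$, so I am free to replace it by the smooth version
\begin{align*}
\tilde{h}(t, y) := \E\left[\int_0^t h(W_s)\,\dd s \,\Big|\, W_t = y,\, \tau > t\right]
\end{align*}
for $t > 0$ and $y \in (a, b)$; this version still realizes $\int_0^t h(W_s)\,\dd s$ almost surely on $\{\tau > t\}$.  Writing the conditional expectation out against the Brownian-bridge marginal at time $s$ and the transition density of Brownian motion killed at the exit from $(a, b)$, one sees that this $\tilde{h}$ is $C^{1,2}$ on $(0,\infty) \times (a, b)$, because both kernels are $C^\infty$ there and differentiation under the integral is justified.

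With smoothness in hand, It\^o's formula on $[0, \tau)$ gives
\begin{align*}
\dd \tilde{h}(t, W_t) = \left(\partial_t \tilde{h} + \tfrac{1}{2}\partial_{xx}\tilde{h}\right)(t, W_t)\,\dd t + \partial_x \tilde{h}(t, W_t)\,\dd W_t.
\end{align*}
However, by \eqref{E hg}, $\tilde{h}(t, W_t) = \int_0^t h(W_s)\,\dd s$ on $\{\tau > t\}$ is continuous and of finite variation in $t$.  Uniqueness of the semimartingale decomposition forces the martingale part to vanish, so $\partial_x \tilde{h}(t, W_t) = 0$ almost surely on $\{\tau > t\}$ for every $t > 0$.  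Since $W_t$ restricted to $\{\tau > t\}$ has positive density on every open subinterval of $(a, b)$, this yields $\partial_x \tilde{h}(t, \cdot) \equiv 0$ on $(a, b)$, so $\tilde{h}(t, y) = k(t)$ for some function $k$.

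Substituting back, \eqref{E hg} becomes $\int_0^t h(W_s)\,\dd s = k(t)$ on $\{\tau > t\}$.  Differentiating in $t$ yields $h(W_t) = k'(t)$ on $\{\tau > t\}$ for a.e.\ $t > 0$; the right-hand side is deterministic while the left varies with $W_t$, whose distribution on $\{\tau > t\}$ charges every open subinterval of $(a, b)$.  Since $h$ is continuous, this forces $h$ to be constant on $(a, b)$, as claimed.  The main obstacle in this plan is the $C^{1,2}$ regularity of the redefined $\tilde{h}$: because $h$ is only assumed continuous, one cannot differentiate $h$ directly, and all smoothness must come from the Gaussian structure of the killed-Brownian and bridge densities, accessed by differentiating under the integral sign.
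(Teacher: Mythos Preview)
Your approach via It\^o's formula is quite different from the paper's. The paper gives an entirely elementary path-space argument: assuming $h$ is not constant, it locates two small subintervals of $(a,b)$ on which the values of $h$ differ by at least $3\epsilon$, and then exhibits---using only the full support of Wiener measure---two paths on $[0,T]$ that both stay in $(a,b)$ and end in the same small neighbourhood $(-\tilde\delta,\tilde\delta)$, yet whose time integrals $\int_0^T h(\omega_s)\,\dd s$ differ by more than the oscillation of $\tilde h(T,\cdot)$ on that neighbourhood. This contradicts \eqref{E hg} directly and requires nothing of $\tilde h$ beyond measurability.

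Your route, by contrast, stands or falls on the $C^{1,2}$ regularity of the redefined $\tilde h$, and this is where there is a genuine gap. Writing your conditional expectation out against killed-Brownian densities, the numerator $u(t,y):=p^{(a,b)}(t,0,y)\,\tilde h(t,y)$ is a Duhamel integral for the inhomogeneous heat equation $\partial_t u=\tfrac12\partial_{yy}u+F$ with source $F(t,y)=h(y)\,p^{(a,b)}(t,0,y)$. Since $p^{(a,b)}$ is smooth, $F$ inherits exactly the spatial regularity of $h$, namely bare continuity---and it is classical that solutions of the heat equation with a merely continuous source need \emph{not} be $C^{1,2}$; one typically needs H\"older (or at least Dini) continuity of $F$ for that. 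In addition, $h$ is only assumed continuous on the \emph{open} interval $(a,b)$, so it may blow up at a finite endpoint or be unbounded when $a$ or $b$ is infinite; this already jeopardises the ``differentiation under the integral is justified'' step you invoke without proof. You correctly flag the regularity as the main obstacle, but you do not overcome it, and there is no obvious mollification argument (smoothing $h$ destroys the hypothesis). The paper's construction avoids all of this by never asking $\tilde h$ to be anything more than measurable.
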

\begin{proof}
    Assume that \eqref{E hg} holds but $h$ is not a constant.
    Then there exist some
    $\epsilon > 0$ and some $y \in (a+\epsilon,b-\epsilon)$ such that $|h(y) - h(0)| =
    5 \epsilon$; without loss of generality, assume $y \in (0,b-\epsilon)$. Now, define
    $\tilde{y} := \inf\{y \in [0,b)| |h(y) - h(0)| \geq
   5\epsilon\}$. Assume, again without loss of generality, that $h(0) = 0$ and
    $h(\tilde{y}) =5 \epsilon> 0.$ Observe that there exists some $\delta \in (0, \min(-a,\epsilon))$
    such that $h(y) < \epsilon$ for all $y$ with $|y| <
    \delta$ and $h(y) \geq 4 \epsilon$ for all $y$
    with $|y - \tilde{y}| < \delta.$  In summary, we assume for the rest of the proof that $h(y) > -5\epsilon$ for all $y \in (-\delta, \tilde{y} + \delta)$, that $h(y) \leq \epsilon$ for all $y \in (-\delta, \delta)$, and that $h(y) \geq 4\epsilon$ for all $y \in (\tilde{y}-\delta, \tilde{y}+\delta)$.

    Now, fix $T>0$ and define
	$$\mathfrak{A} := \left\{\omega \in C([0,T], \R) \left| -\delta < \inf_{s \in [0,T]} \omega_s \leq \sup_{s \in [0,T]} \omega_s < \tilde{y}+\delta, \int_0^{T} h(\omega_s) \dd s = \widetilde{h}(T,\omega_T)\right.\right\}$$
 	and $\mathfrak{B} := \{\omega_T: \omega \in \mathfrak{A}\}$.
    Since $h$ is uniformly continuous on $[-\delta, \tilde{y}+\delta]$, the mapping $\mathfrak{A} \ni \omega \mapsto  \int_0^{T} h(\omega_s) \dd s = \widetilde{h}(T, \omega_T)$ is continuous if $\mathfrak{A}$ is equipped with the supremum norm. Therefore, $\widetilde{h}(T,\cdot): \mathfrak{B} \rightarrow \R$ is continuous and there exists $\widetilde{\delta} \in (0, \delta)$ such that $|\widetilde{h}(T,y_1) - \widetilde{h}(T,y_2)| < \epsilon T$ for all $y_1,y_2 \in (-\widetilde{\delta}, \widetilde{\delta}) \cap \mathfrak{B}$ since  \eqref{E hg} almost surely implies that $\mathfrak{B}$ is dense in $(-\delta, \tilde{y}+\delta)$. 

	There exists $\omega \in \mathfrak{A}$ such that  $-\widetilde{\delta} < \inf_{s \in [0,T]} \omega_s \leq \sup_{s \in [0,T]} \omega_s < \widetilde{\delta}$ since \eqref{E hg} holds almost surely; thus $\widetilde{h}(T,y) < 2 \epsilon T$ for all  $y \in (-\widetilde{\delta}, \widetilde{\delta}) \cap \mathfrak{B}$.  In order to obtain a contradiction we now consider an $\omega \in \mathfrak{A}$ such that $\omega_T \in (-\widetilde{\delta}, \widetilde{\delta})$ and $\widetilde{h}(T,\omega_T) > 2 \epsilon T$.  Towards this end, choose an $\omega \in \mathfrak{A}$ such that  $\tilde{y} - \delta < \inf_{s \in [0.1T,0.9T]} \omega_s \leq \sup_{s \in [0.1T,0.9T]} \omega_s < \tilde{y} + \delta$ and  $\omega_T \in (-\widetilde{\delta}, \widetilde{\delta})$, which again always exists. Observe that for this choice of $\omega$ we have $\widetilde{h}(T,\omega_T) \geq -0.2 T \cdot 5 \epsilon + 0.8 T \cdot 4 \epsilon > 2 \epsilon T$.
 Thus, under the assumptions of the lemma, $h$ is constant.
\end{proof}

\bibliographystyle{apalike}
\setlength{\bibsep}{1pt}
\small\bibliography{aa_bib}{}
\end{document}